\newtheorem{theorem}{Theorem}
\newtheorem{lemma}{Lemma}
\newtheorem{cor}{Corollary}
\theoremstyle{definition}
\newtheorem{definition}{Definition}
\newtheorem{remark}{Remark}
\newcommand{\snr}{\textnormal{\small  \fontfamily{phv}\selectfont snr}}
\newcommand{\snrs}{\textnormal{\footnotesize \fontfamily{phv}\selectfont snr}}
\newcommand{\bE}{\mathbb{E}}
\newcommand{\bY}{\mathbf{Y}}
\newcommand{\bX}{\mathbf{X}}
\newcommand{\bW}{\mathbf{W}}
\newcommand{\bS}{\mathbf{S}}
\newcommand{\bM}{\mathbf{M}}
\newcommand{\by}{\mathbf{y}}
\newcommand{\bx}{\mathbf{x}}
\newcommand{\bw}{\mathbf{w}}
\newcommand{\bv}{\mathbf{v}}
\newcommand{\bV}{\mathbf{V}}
\newcommand{\bZ}{\mathbf{Z}}
\newcommand{\bA}{\mathbf{A}}
\newcommand{\one}{\boldsymbol{1}}
\newcommand{\rw}{\rightarrow}
\title{Approximate Sparsity Pattern Recovery: Information-Theoretic Lower Bounds}
\author{Galen Reeves, \IEEEmembership{Member,~IEEE} and Michael Gastpar \IEEEmembership{Member,~IEEE}

\thanks{Material in this paper was presented in part at the IEEE International Symposium on Information Theory, Toronto, Canada, July 2008 and the 46th Annual Conference on Information Sciences and Systems, Princeton, NJ, March 2012.}
\thanks{This work was supported in part by ARO MURI No. W911NF-06-1-0076.}
\thanks{G. Reeves was with the Department of Electrical Engineering and Computer
Sciences, University of California, Berkeley, CA 94720 USA. He is now with the Department of Statistics, Stanford University, Stanford, CA 94305-4065 USA. (e-mail: greeves@stanford.edu)}
\thanks{M. Gastpar is with the School of Computer and Communication Sciences,
Ecole Polytechnique F\'ed\'erale (EPFL), 1015 Lausanne, Switzerland,
and with the Department of Electrical Engineering and Computer
Sciences, University of California, Berkeley, CA 94720 USA (e-mail: michael.gastpar@epfl.ch).}
}
\begin{document}
\maketitle

\begin{abstract}
Recovery of the sparsity pattern (or support) of an unknown sparse vector from a small number of noisy linear measurements is an important problem in compressed sensing. In this paper, the high-dimensional setting is considered. It is shown that if the measurement rate and per-sample signal-to-noise ratio (SNR) are finite constants independent of the length of the vector, then the optimal sparsity pattern estimate will have a constant fraction of errors. Lower bounds on the measurement rate needed to attain a desired fraction of errors are given in terms of the SNR and various key parameters of the unknown vector. The tightness of the bounds in a scaling sense, as a function of the SNR and the fraction of errors, is established by comparison with existing achievable bounds. Near optimality is shown for a wide variety of practically motivated signal models.
\end{abstract}

\begin{keywords}
compressed sensing, information-theoretic bounds, random matrix theory, sparsity, support recovery.
\end{keywords}


\section{Introduction} \label{sec:intro}

\IEEEPARstart{S}{uppose} that a vector $\bx$ of length $n$ is known to have a small number $k$ of nonzero entries, but the values and locations of the nonzero entries are unknown and must be estimated from a set of $m$ noisy linear projections (or samples) given by the vector
\begin{align}\label{eq:model}
\by = A \bx + \mathbf{\bw}
\end{align}
where $A$ is a known $m \times n$ measurement matrix and $\mathbf{\bw}$ is additive white Gaussian noise. The problem of {\em sparsity pattern recovery} is to determine which entries in $\bx$ are nonzero. This problem, which is known variously throughout the literature as support recovery or model selection, has applications in compressed sensing \cite{Donoho_IT06,CandesRombergTao06A, CandesTao06}, sparse approximation \cite{DevoLore93}, signal denoising \cite{Chen_JSC98}, subset selection in regression \cite{Miller90}, and structure estimation in graphical models \cite{MeinBuhl06}.

A great deal of previous work \cite{MeinBuhl06, ZhaoYu_JMLR06, Wainwright_SharpThresholds_IEEE09, Wainwright_InfoLimits_IEEE09,FLetcher_IEEE09, Wang_IEEE10, akcakaya_IEEE10, AerSalZha10, Reeves_masters, RG08}, has focused on necessary and sufficient conditions for exact recovery of the sparsity pattern. By contrast, this paper studies the tradeoff between the number of samples $m$ and the number of detection errors. We focus on the high-dimensional setting where the sparsity rate (i.e.~the fraction of nonzero entries) and the per-sample signal-to-noise ratio (SNR) are finite constants, independent of the vector length $n$. Our results are information-theoretic lower bounds on the sampling rate $\rho = m/n$ needed to attain a desired detection error rate $D$. These bounds are fundamental in the sense that they hold for any possible recovery algorithm. Our results are given explicitly in terms of the sparsity rate, the SNR, and various key properties of the unknown vector. Complementary upper bounds corresponding to a variety of recovery algorithms are given in the companion paper \cite{RG_IEEE_12}. 

\subsection{Overview of Main Contributions}

We study the high-dimensional setting where the measurement matrix $A$ is generated randomly and independently of the vector $\bx$ and the measurements are corrupted by additive white Gaussian noise. Three contributions of the paper are the following:
\begin{enumerate}
\item {\em Fundamental limits:} We derive lower bounds on the sampling rate needed for optimal recovery algorithms. While previous work has focused on exact recovery \cite{MeinBuhl06, ZhaoYu_JMLR06, Wainwright_SharpThresholds_IEEE09, Wainwright_InfoLimits_IEEE09,FLetcher_IEEE09, Wang_IEEE10} or the scaling behavior for approximate recovery \cite{akcakaya_IEEE10}, our work gives an explicit bound on the tradeoff between the sampling rate and the fraction of detection errors. In conjunction with the upper bounds in \cite{RG_IEEE_12}, this bound provides a sharp characterization between what can and cannot be recovered in the presence of noise. This characterization is rigorous and thus validates recent predictions made using the powerful but heuristic replica method from statistical physics \cite{Tanaka02,  Muller03, GuoVer05,KabWadTan09, RanFleGoy09, GuoBarSha09}. 

\item {\em Insights into the difficulty of recovery:} Using tools from information theory, we find a sharp separation into two problem regimes -- one in which the problem is fundamentally noise-limited, and one in which the problem is limited by the behavior of the sparse components themselves.

\item {\em Effects of prior information:} The upper bounds in \cite{RG_IEEE_12} correspond to settings where the approximate number of nonzero entries is known. By contrast, the lower bounds in this paper apply to settings where the recovery algorithm knows the exact number and distribution of the nonzero entries. Interestingly, the resulting bounds show that in many cases, this additional knowledge does not significantly improve the ability to estimate the sparsity pattern.

\end{enumerate}

Beyond these results, our framework also permits us to prove some further insights. For instance, we provide a tight characterization of both the low-distortion and high-SNR behaviors of the sampling rate for a variety of signal classes.

\subsection{Relation to Previous Work}

A great deal of previous work has focused directly on the fundamental limits of exact sparsity pattern recovery \cite{Gastpar_ISIT00, MeinBuhl06, ZhaoYu_JMLR06, Wainwright_SharpThresholds_IEEE09, Wainwright_InfoLimits_IEEE09,FLetcher_IEEE09, Wang_IEEE10}. An initial necessary bound based on Fano's inequality was provided by Gastpar \cite{Gastpar_ISIT00} who considered Gaussian signals and deterministic measurement matrices. Necessary and sufficient scalings of $(n,k,m)$ were given by Wainwright \cite{Wainwright_InfoLimits_IEEE09} who considered deterministic vectors, characterized by the size of their smallest nonzero elements, and i.i.d.~Gaussian measurement matrices. Wainwright's necessary bound was strengthened in our earlier work \cite{Reeves_masters}, for the special case where $k$ scales proportionally with $n$, and for general scalings by Fletcher et al. \cite{FLetcher_IEEE09} and Wang et al. \cite{Wang_IEEE10}. 

Based on the work outlined above, it is now well understood that $m = \Theta( k \log n)$ samples are both necessary and sufficient for exact recovery when the SNR is finite and there exists a fixed lower bound on the magnitude of the smallest nonzero elements \cite{Wainwright_InfoLimits_IEEE09,FLetcher_IEEE09, Wang_IEEE10}. In contrast to the scaling required for bounded MSE, this scaling says that the ratio $m/k$ must grow without bound as the vector length $n$ becomes large. As a consequence, exact recovery is impossible in the setting considered in this paper, where the sparsity rate, sampling rate, and SNR are finite constants, independent of the vector length $n$.

From an information-theoretic perspective, a number of works have studied the rate-distortion behavior of sparse sources \cite{Weid00, Fletcher_JASP06,Sarvotham_Allerton06, Fletcher_SSP07,Fletcher_SP08,WeidVett08}. Most closely related to this paper, however, is work that has addressed approximate sparsity pattern recovery directly. 
For the special case where the values of the nonzero entries are identical and known (throughout the system), Aeron et al. \cite[Theorem V-2]{AerSalZha10} showed that $m = C \cdot k \log(n/k)$ samples are necessary and sufficient for an ML decoder where the constant $C$ is bounded explicitly in terms of the SNR and the desired detection error rate. In the general setting where the nonzero values are unknown, Akcakaya and Tarokh \cite{akcakaya_IEEE10} showed that $m = C \cdot k \log(n/k)$ samples are necessary and sufficient for a joint typicality recovery algorithm where the constant $C$ is finite, but otherwise unspecified. (It can also be shown that this same result is implied directly by the previous work of Cand\`{e}s et al. \cite{ CandesRombergTao06}.) An important difference between these previous results and the results in this paper is that we give an explicit and relatively tight characterization of the constant $C$ for a broad class of signal models.

\subsection{Notation}
When possible, we use the following conventions: a random variable $X$ is denoted using uppercase and its realization $x$ is denoted using lowercase; a random vector $\bV$ is denoted using boldface uppercase and its realization $\bv$ is denoted using boldface lowercase; and a random matrix $\bM$ is denoted using boldface uppercase and its realization $M$ is denoted using uppercase. We use $[n]$ to denote the set $\{1,2,\cdots,n\}$. For a subset $S \subset [n]$ and vector $\bx$, we use $\bx_S$ to denote the $|S|$-dimensional vector of the entries in $\bx$ indexed by $S$. Also, for any $m \times n$ matrix $A$, we use $A_{S}$ to denote the $m \times |G|$ matrix corresponding to the columns of $A$ indexed by $S$. All logarithms are taken with respect to the natural base. Unspecified constants are denoted by $C$ and are assumed to be positive and finite.

\section{Problem Formulation} \label{sec:stochastic_signal_model}

Throughout this paper, the unknown signal is modeled as a $n$-dimensional random vector $\bX$. We consider a 
noisy linear observation model given by
\begin{align}
\bY = \bA \bX + \frac{1}{\sqrt{\snr}} \bW \label{eq:estimation_problem}
\end{align}
where $\bA$ is a random $m \times n$ matrix, $\snr \in (0,\infty)$ is a fixed scalar, and $\bW \sim \mathcal{N}(0, I_{m \times m})$ is additive white Gaussian noise. The vector $\bX$, matrix $\bA$, and noise $\bW$ are assumed to be mutually independent. Note that if $\bE[ \|\bA \bX\|^2] = m$, then $\snr$ corresponds to the per-sample signal-to-noise ratio of the problem. 

The problem studied in this paper is recovery of the sparsity pattern $S^*$ of $\bX$ which is given by 
\begin{align}
S^* = \{ i \in [n] : X_i \ne 0\}.  
\end{align} 
We assume throughout that a recovery algorithm is given the vector $\bY$, the matrix $\bA$, and the distribution on the vector $\bX$. 
The algorithm then returns an estimate $\hat{S}$ of the true sparsity pattern $S^*$.

\subsection{Distortion Measure}\label{sec:formulation_distortion}

To assess the quality of an estimate $\hat{S}$ it is important to note that there are two types of errors. A {\em missed detection} occurs when an element in $S^*$ is omitted from the estimate $\hat{S}$. The missed detection rate is given by
\begin{align}
\text{MDR}(S^*,\hat{S}) =\frac{1}{|S^*|} \sum_{i=1}^n \one( i \in S^*, i \notin \hat{S}).
\end{align}
Conversely, a {\em false alarm} occurs when an element not present in $S^*$ is included in $\hat{S}$. The false alarm rate is given by
\begin{align}
\text{FAR}(S^*,\hat{S}) =\frac{1}{|\hat{S}|} \sum_{i=1}^n \one( i \notin S^*, i \in \hat{S}).
\end{align}

In general, various tradeoffs between the two errors types can be considered. In this paper, however, we focus exclusively on the distortion measure
\begin{align}
d(S^*,\hat{S}) &= \max\big( \text{MDR}(S^*,\hat{S}) ,\,\text{FAR}(S^*,\hat{S})\big). \label{eq:distortion}
\end{align}
This distortion measure 
is a metric on subsets of $[n]$.

For any distortion $D \in [0,1]$ we define the error probability
\begin{align}
\varepsilon^*_n(D) = \min_{p(\hat{s}|\by,A)} \Pr\left[ d(S^*, \hat{S}) > D \right]\label{eq:error_prob}
\end{align}
where the minimization is over all conditional probability mass functions $p(\hat{s} | \by, A)$ and probability is taken with respect to the distribution on the vector $\bX$, the matrix $\bA$, and the noise $\bW$.

%

\subsection{Signal and Measurement Models}\label{sec:model_assumptions}

To characterize the problem of sparsity pattern recovery, we analyze a sequence of recovery problems $\{ \bX(n), \bA(n), \bW(n)\}_{n \ge 1}$ indexed by the vector length $n$. 

\medskip
{\bf \noindent Stochastic Signal Assumptions:} We consider the following assumptions on a sequence of random vectors $\bX(n) \in \mathbb{R}^n$. 
\begin{enumerate}

\item[SS1] {\em Linear Sparsity:} The sparsity pattern $S^*$ is distributed uniformly over all subsets of $[n]$ of size $k(n)$ where $k(n)$ is a known sequence that obeys
\begin{align}
\lim_{n \rw \infty} \frac{k(n)}{n} = \kappa
\end{align} 
for some {\em sparsity rate} $\kappa \in (0,1/2)$.  

\item[SS2] {\em IID Nonzero Entries:}  The nonzero entries $\{X_i : i \in \mathcal{S}^*\}$ are i.i.d.~$p_U$ where $p_U$ is a probability distribution on the real line with no mass at 0, i.e.~$p_U(\{0\})= 0$

\end{enumerate}

Assumption SS1 says that all but a fraction $\kappa$ of the entries are equal to zero, and Assumption SS2 characterizes the behavior of the nonzero entries. Note that under these assumptions the number of nonzero value $k(n)$ is a deterministic (non-random) property of the distribution on $\bX$, and thus knowledge of the distribution on $\bX$ implies that the exact number of nonzero entries is known.

Throughout the paper, we also use $p_X$ to denote the probability distribution on the real line given by $$p_X = (1-\kappa) \delta_0  + \kappa\,  p_U$$
where $\delta_0$ denotes a point-mass at $x=0$. Note that there is a one-to-one correspondence between the pair $(\kappa,p_U)$ and the distribution $p_X$, and that $p_X$ characterizes the {\em marginal distribution} of each entry in $\bX$.

\medskip
{\bf \noindent Measurement Assumptions:} We consider a subset of the following assumptions on the sequence of measurement matrices  $\bA(n) \in \mathbb{R}^{m(n) \times n}$. 
\begin{enumerate}
\item[M1] {\em Non-Adaptive Measurements:} The distribution on $\bA(n)$ is independent of the vector $\bx(n)$ and the noise $\bW(n)$. 
\item[M2] {\em Finite Sampling Rate:} The number of rows $m(n)$ obeys
\begin{align}
\lim_{n \rw \infty} \frac{m(n)}{n} = \rho
\end{align}
for some {\em sampling rate} $\rho \in (0,\infty)$.
\item[M3] {\em Row Normalization:} The distribution on $\bA(n)$ is normalized such that each of the $m(n)$ rows has unit magnitude on average, i.e. 
\begin{align}
\bE\big[ \| \bA(n)\|^2_F\big] = m(n)
\end{align} 
where $\|\cdot\|_F$ denotes the Frobenius norm. 
\item[M4] {\em IID Entries:} The entries of $\bA(n)$ are i.i.d.~with mean zero and variance $1/n$.
\end{enumerate}

Assumptions M1-M3 are used throughout the paper. A sampling rate $\rho < 1$ corresponds to the {\em compressed} sensing setting where the number of equations $m$ is less than the number of unknown signal values $n$. A sampling rate $\rho =1$ corresponds to the number of linearly independent measurements that are needed to recover an arbitrary vector in the absence of any measurement noise. Assumption M4 is used to provide stronger bounds Section~\ref{sec:iid_bounds}.

\subsection{The Sampling Rate-Distortion Function}

Under Assumptions SS1-SS2 and M1-M3, the asymptotic recovery problem is characterized by the sampling rate $\rho$, limiting distribution $p_X$, and $\snr$.

\begin{definition}\label{def:sampling_rate_distortion_function}
A distortion $D$ is {\em achievable} for a fixed tuple $(\rho,p_X, \snr)$ if there exists a sequence of measurement matrices satisfying Assumptions M1-M3 such that
\begin{align}
\lim_{n \rw \infty}\varepsilon^*_n(D) = 0 \label{eq:achievability}
\end{align}
for a sequence of vectors satisfying Assumptions SS1-SS2.
\end{definition}

\begin{definition}
For a fixed tuple $(D,p_X, \snr)$, the {\em sampling rate-distortion function} $\rho^*(D,p_X, \snr)$ is given by
\begin{align}
\rho^*(D,p_X, \snr)= \inf\{ \rho \ge 0 \, : \, \text{$D$ is achievable}\}.
\end{align}
\end{definition}

To lighten the notation, we will denote the sampling rate-distortion function using $\rho^*$ where the dependence on the tuple  $(D,p_X,\snr)$ is implicit.

\begin{remark}
In \cite{RG_IEEE_12}, upper bounds on the achievable distortion are derived under a related but slightly different set of signal assumptions (e.g. the unknown vector is nonrandom and the recovery algorithm is only given the approximate fraction of nonzero entries). In \cite{reeves_thesis}, it is shown that the lower bounds derived under the assumptions of this paper imply corresponding lower bounds for the setting studied in \cite{RG_IEEE_12}. 
\end{remark}

\section{Bounds for Arbitrary Measurement Matrices}\label{sec:gen_bounds}

This section gives lower bounds on the sampling rate distortion function. These bounds apply generally to any sequence of measurement matrices obeying Assumptions M1-M3.

Before we present our bounds, two more definitions are needed. First, we use the notation
\begin{align}
V_X = \text{Var}(X)
\end{align}
to denote the variance of the distribution $p_X$.

Also, we define
\begin{align}\label{eq:R_D}
R(D;\kappa)=
\begin{cases}
 H(\kappa ) - \kappa  H(D) - (1\!-\!\kappa) H\big(\frac{\kappa D}{1\!-\!\kappa}\big), & \text{$D < 1\!-\!\kappa$}\\
0, & \text{$ D \ge 1\!-\!\kappa$}
\end{cases}
\end{align}
where $H(p) = -p \log p - (1-p) \log (1-p)$ is binary entropy. It is straightforward to show that $R(D;\kappa)$ corresponds to the information rate (given in nats per dimension) required to encode a sparsity pattern to within distortion $D$.

\subsection{Initial bound via Fano's inequality}

We begin with a lower bound on the achievable distortion. This bound is general in the sense that it depends only on the variance $V_X$ of the distribution $p_X$, and it serves as a building block for our stronger bounds. The proof is based on Fano's inequality and is given in Appendix~\ref{sec:thm:LB_gen1_proof}. 

\begin{theorem}\label{thm:LB_gen1}
Under Assumptions SS1-SS2 and M1-M3, a distortion $D$ is not achievable for the tuple $(\rho,p_X,\snr)$ if
\begin{align}
\min(1,\rho) \log\big( 1+ \max(1,\rho) V_X \, \snr\big)  < 2 R(D;\kappa). \label{eq:LB_gen1}
\end{align}
\end{theorem}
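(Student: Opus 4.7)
The plan is to sandwich $I(S^{*};\hat{S})$ between an information-capacity upper bound coming from the Gaussian channel structure and a distortion version of Fano's inequality coming from the combinatorial size of ``distortion balls'' around sparsity patterns. Comparing the two in the limit $n\to\infty$ (assuming $D$ is achievable, so that $\varepsilon_n^{*}(D)\to 0$) produces the stated inequality, whose contrapositive is the theorem.

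\emph{Capacity-type upper bound.} By M1 and because $S^{*}$ is a function of $\bX$, the chain $S^{*} - \bX - (\bY,\bA) - \hat{S}$ is Markov, so $I(S^{*};\hat{S}) \le I(\bX;\bY\mid\bA)$. Conditioning on $\bA = A$ and applying the Gaussian maximum-entropy bound gives $I(\bX;\bY\mid\bA=A) \le \tfrac{1}{2}\log\det(I + \snr\,A\,\Sigma_X\,A^{T})$, where $\Sigma_X$ is the covariance matrix of $\bX$. Jensen's inequality applied to the at most $r := \min(m,n)$ non-zero eigenvalues of $A\Sigma_X A^{T}$, and then again over the law of $\bA$, yields
\begin{align*}
I(\bX;\bY\mid\bA) \;\le\; \tfrac{r}{2}\,\log\!\Big(1 + \tfrac{\snr}{r}\,\bE[\tr(\Sigma_X \bA^{T}\bA)]\Big).
\end{align*}
A direct calculation under SS1--SS2 shows $\Sigma_X = (V_X - c_n)\,I + c_n\,\one\one^{T}$ with $c_n = O(1/n)$, so $\|\Sigma_X\|_{\mathrm{op}} = V_X + o(1)$. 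Combined with the normalization $\bE[\tr(\bA^{T}\bA)] = m$ from M3, and the elementary inequality $\tr(\Sigma_X \bA^{T}\bA)\le \|\Sigma_X\|_{\mathrm{op}}\,\tr(\bA^{T}\bA)$ for PSD matrices, I get $\bE[\tr(\Sigma_X\bA^{T}\bA)] \le (V_X + o(1))\,m$. Substituting and using $r/n\to\min(1,\rho)$, $m/r\to\max(1,\rho)$ produces
\begin{align*}
\tfrac{1}{n}\,I(\bX;\bY\mid\bA) \;\le\; \tfrac{1}{2}\min(1,\rho)\,\log\!\big(1 + \max(1,\rho)\,V_X\,\snr\big) + o(1).
\end{align*}

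\emph{Fano-type lower bound.} Let $P_e := \Pr[d(S^{*},\hat{S}) > D]$ and let $N(D)$ denote the maximum, over $\hat{s}\subset[n]$ with $|\hat{s}|=k$, of $|\{s:|s|=k,\;d(s,\hat{s})\le D\}|$. The standard distortion version of Fano's inequality (splitting $H(S^{*}\mid\hat{S})$ according to the error event $\{d(S^{*},\hat{S}) > D\}$) gives
\begin{align*}
I(S^{*};\hat{S}) \;\ge\; (1 - P_e)\log\binom{n}{k} \,-\, \log N(D) \,-\, H(P_e).
\end{align*}
For $|s| = |\hat{s}| = k$ the MDR and FAR coincide, so $\{d(s,\hat{s})\le D\}$ reduces to $\{|s\cap\hat{s}|\ge (1-D)k\}$; a Stirling/entropy estimate on $N(D) = \sum_{j\ge (1-D)k} \binom{k}{j}\binom{n-k}{k-j}$ then yields $\tfrac{1}{n}[\log\binom{n}{k} - \log N(D)]\to R(D;\kappa)$ for $D < 1-\kappa$. (The case $D\ge 1-\kappa$ renders the theorem vacuous since then $R(D;\kappa)=0$ while the left-hand side of the hypothesized inequality is strictly positive for $\rho>0$, $V_X\snr>0$.)

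\emph{Combination and main obstacle.} If $D$ is achievable, then $P_e\to 0$, and the two displays together force $\min(1,\rho)\log(1 + \max(1,\rho)\,V_X\,\snr) \ge 2R(D;\kappa)$; the contrapositive is the theorem. I expect the main technical obstacle to be the trace bound inside the capacity argument: since Assumption M3 alone does \emph{not} imply that $\bE[\bA^{T}\bA]$ is isotropic (one could have all the mass concentrated in a single direction, so its operator norm could be as large as $m$), the argument cannot use $\bE[\bA^{T}\bA]\preceq(m/n)\,I$ and must pass through $\|\Sigma_X\|_{\mathrm{op}}$ instead. The off-diagonal correlations in $\Sigma_X$ induced by fixing $|S^{*}| = k$ are only $O(1/n)$ but non-zero, so the short eigenvalue calculation sketched above is precisely what is needed to close the bound under the weakest possible hypotheses on $\bA$.
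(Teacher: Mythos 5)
Your proposal is correct and follows the same two-pronged skeleton as the paper's proof: a distortion version of Fano's inequality in which the ``distortion ball'' around a size-$k$ estimate is counted by $\sum_{\ell \le Dk}\binom{k}{\ell}\binom{n-k}{\ell}$ (including the same implicit reduction to estimates of cardinality $k$) and converges to $R(D;\kappa)$ via Stirling, played against a Gaussian maximum-entropy bound on $I(\bX;\bY\mid\bA)$. The only substantive difference is how the expectation over $\bA$ is closed under the weak normalization M3: you apply Jensen once over the at most $\min(m,n)$ nonzero eigenvalues of $A\Sigma_X A^T$ together with $\tr(\Sigma_X \bA^T\bA)\le \|\Sigma_X\|_{\mathrm{op}}\,\tr(\bA^T\bA)$, which produces the $\min/\max$ form in a single stroke, whereas the paper uses concavity of $\log\det$ plus Hadamard's inequality applied to $\bA\bA^T$ and, via Sylvester's identity, to $\bA^T\bA$, and then takes the minimum of the two resulting bounds --- so the obstacle you flag (M3 alone does not make $\bE[\bA^T\bA]$ isotropic) is real, but the paper circumvents it by Hadamard rather than by an operator-norm bound. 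Your handling of the covariance of $\bX$ (exchangeable, off-diagonal entries $O(1/n)$, hence $\|\Sigma_X\|_{\mathrm{op}} = V_X + o(1)$) is in fact slightly more careful than the paper, which simply asserts $\bE[(\bX-\bE[\bX])(\bX-\bE[\bX])^T] = V_X I_{n\times n}$. Both routes yield the identical asymptotic bound $\tfrac{1}{n}I(\bX;\bY\mid\bA) \le \tfrac{1}{2}\min(1,\rho)\log(1+\max(1,\rho)V_X\,\snr) + o(1)$, and the combination with Fano is the same as in the paper.
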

\begin{remark}
Results similar to Theorem~\ref{thm:LB_gen1} have been derived previously in the special case of exact recovery \cite{Gastpar_ISIT00, Sarvotham_Allerton06, Wang_IEEE10}, as well as for approximate recovery in the special case of binary signals \cite{AerSalZha10}.
\end{remark}

Using Theorem~\ref{thm:LB_gen1} and the concavity of the logarithm, we obtain a simplified lower bound on the sampling rate-distortion function:
\begin{align}
\rho^* \ge \frac{2 R(D;\kappa)}{\log(1+ V_X \snr)} \label{eq:LB_simple}.
\end{align}

Theorem~\ref{thm:LB_gen1} shows that a nonzero sampling rate $\rho$ is necessary in the presence of noise for any distribution $p_X$ with finite variance. One critical weakness, however,  is that it does not reflect the true difficulty of sparsity pattern recovery when the desired distortion $D$ is small. For example, if $D = 0$, then the corresponding lower bound on sampling rate is finite even though it has been shown that an infinite sampling rate is needed in the presence of noise \cite{Reeves_masters}. Among other things, this discrepancy leaves open the possibility that the total number of recovery errors could grow sublinearly with the length $n$ such that the fraction of errors is asymptotically zero.

\subsection{Improved lower bound via a genie argument}

Our next result allows us to lower bound the distortion corresponding to a distribution $p_X$ in terms of a different but related distribution $p_Z$. This result is useful since it allows us to isolate the key aspects of the recovery problem that make recovery difficult. The proof is based on a genie argument and is given in Appendix~\ref{sec:prop_genie_proof}.

\begin{lemma}\label{prop:genie}
Let $p_X$ and $p_Z$ be probability measures with the following properties:
\begin{align}
&0 < \kappa_Z \le \kappa_X \label{eq:kZ_bound}\\
&\frac{p_Z(A)}{1-\kappa_Z} \le \frac{p_X(A)}{1-\kappa_X} \quad \text{for all $A \subseteq \mathbb{R}\backslash \{0\}$} \label{eq:pZ_bound}
\end{align}
where $\kappa_X = 1 - p_X(\{0\})$ and $\kappa_Z = 1 - p_Z(\{0\})$. 
%
%
For a given tuple $(D,\rho,\snr)$ define
\begin{align}
\tilde{D} &= \Big( \frac{1-\kappa_Z}{1-\kappa_X} \Big) \Big( \frac{\kappa_X}{\kappa_Z} \Big) D \label{eq:tilde_D}\\
\tilde{\rho} & = \Big( \frac{1-\kappa_Z}{1-\kappa_X} \Big) \rho\\
\tilde{\snr} & = \Big( \frac{1-\kappa_Z}{1-\kappa_X} \Big) \snr.
\end{align}
Under Assumptions SS1-SS2 and M1-M3, the following statement holds: If the distortion $\tilde{D}$ is not achievable for the tuple $(\tilde{\rho},p_Z,\tilde{\snr})$, then the distortion $D$ is not achievable for the tuple $(\rho,p_X,\snr)$. 
\end{lemma}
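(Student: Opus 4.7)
The plan is to prove the contrapositive: assuming $D$ is achievable for $(\rho,p_X,\snr)$, I construct a sequence of $(\tilde\rho,p_Z,\tilde\snr)$-admissible matrices and estimators that achieve distortion $\tilde D$. The algebraic starting point is that condition~(\ref{eq:pZ_bound}) is exactly what allows $p_X$ to be written as the mixture
\[
p_U^X \;=\; \mu\,p_U^Z + (1-\mu)\,p_W,
\]
with $\mu \defeq \kappa_Z(1-\kappa_X)/(\kappa_X(1-\kappa_Z)) \in (0,1]$ and $p_W \defeq (p_U^X - \mu\,p_U^Z)/(1-\mu)$ a genuine probability measure on $\mathbb{R}\setminus\{0\}$. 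Equivalently, a single $p_X$-entry can be generated by drawing a label $\beta\in\{Z,W\}$ with $\Pr[\beta=Z] = n'/n \defeq (1-\kappa_X)/(1-\kappa_Z)$ and then drawing the value from $p_Z$ or $p_W$. The useful arithmetic identity is $n-n' = k_X-k_Z$, which says that after placing a $p_Z$-distributed $\bZ$ on $T$ and a fully nonzero $\bV$ on $T^c$, the total support of the resulting vector has size exactly $k_X$.

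Given a $(\tilde\rho,p_Z,\tilde\snr)$-instance $(\bY_{p_Z},A_{p_Z})$ on $n'$ dimensions, I embed it into a $(\rho,p_X,\snr)$-instance on $n \defeq n'(1-\kappa_Z)/(1-\kappa_X)$ dimensions as follows. Pick a uniform random $T\subseteq[n]$ of size $n'$, sample an independent i.i.d.\ $p_W$ vector $\bV$ of length $n-n'$, and place $\bZ$ on $T$ and $\bV$ on $T^c$ to form $\bX$. Invoke the $p_X$-achievability sequence for an $m\times n$ matrix $A_{p_X}$, and identify $A_{p_Z} = \sqrt{n/n'}\,A_{p_X,T}$; this yields the correct sampling rate $\rho n/n' = \tilde\rho$ and the M3 normalization. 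For the measurements, form
\[
\bY_{p_X} \;=\; \sqrt{n'/n}\;\bY_{p_Z} \,+\, A_{p_X,T^c}\,\bV \,+\, \bW_{\mathrm{extra}},
\]
where $\bW_{\mathrm{extra}}$ is independent Gaussian noise of per-component variance $(1-(n'/n)^2)/\snr$. A brief calculation shows $\bY_{p_X}$ has the same distribution as $A_{p_X}\bX + (1/\sqrt{\snr})\bW$, and the required added noise variance is non-negative precisely because $\tilde\snr \ge \snr$.

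Applying the $p_X$-estimator to $(\bY_{p_X},A_{p_X})$ returns $\hat S_{p_X}$ with $d(S^*,\hat S_{p_X})\le D$ w.h.p.; declare $\hat S_{p_Z} \defeq \hat S_{p_X}\cap T$. Because $T^c\subseteq S^*$, every zero entry of $\bX$ lies in $T$, so the false alarms of $\hat S_{p_X}$ coincide with those of $\hat S_{p_Z}$, while the missed detections of $\hat S_{p_Z}$ form a subset of those of $\hat S_{p_X}$. Hence $\mathrm{MDR}_Z \le |\mathrm{MD}|/k_Z \le D k_X/k_Z = \tilde D$ immediately. The main work is $\mathrm{FAR}_Z$: conditioning on $(\bX,\hat S_{p_X})$, the set $T^c$ is uniform over size-$(n-n')$ subsets of $S^*$, and a standard hypergeometric concentration argument yields
\[
|\hat S_{p_Z}| \;\ge\; (k_Z/k_X)\,\mathrm{TP} + |\mathrm{FA}|\, - o(n)
\]
w.h.p. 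Combining this with $|\mathrm{FA}| \le D\,|\hat S_{p_X}|$ and $\mathrm{TP} \ge (1-D)|\hat S_{p_X}|$ and a short algebraic manipulation (using $\kappa_Z\le\kappa_X$, which ensures $k_Z/k_X\le 1$) gives $\mathrm{FAR}_Z \le D k_X/k_Z = \tilde D$ as desired. The main obstacle is precisely this false-alarm step: since the $p_X$-estimator does not pin $|\hat S_{p_X}|$ down, one cannot pass from $\mathrm{FAR}_X\le D$ to $\mathrm{FAR}_Z\le\tilde D$ without the hypergeometric concentration of $|\hat S_{p_X}\cap T|$. A secondary technical matter is verifying that the joint law of $(S^*,\bX_{S^*})$ produced by the embedding matches the $p_X$ law---the uniform random $T$ makes $S^*$ symmetrically distributed of the right size, and the departure of the nonzero entries from being truly i.i.d.\ $p_U^X$ vanishes in the asymptotic limit.
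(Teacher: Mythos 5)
Your proposal runs the reduction in the opposite direction from the paper (the paper's proof is a genie argument: a genie reports a random subset $G\subseteq S^*$, chosen index-wise with probability $q(X_i)$, so that the unreported entries are marginally $p_Z$; subtracting $\bA_G\bX_G$ from $\bY$ turns the $p_X$ problem into a $p_Z$ problem with parameters $(\tilde\rho,\tilde{\snr})$, and non-achievability transfers back through \eqref{eq:d_SG}-type accounting). Your contrapositive/simulation version rests on the same mixture decomposition implied by \eqref{eq:pZ_bound}, but as written it has two genuine gaps. First, the simulated vector does not satisfy Assumption SS2 for $p_X$: by construction it has \emph{exactly} $k_Z$ entries of type $p_U^Z$ and exactly $k_X-k_Z$ of type $p_W$, i.e.\ it follows the fixed-composition law, whereas the model under which the assumed $(\rho,p_X,\snr)$-estimator is guaranteed to work has i.i.d.\ nonzero entries, so the latent composition is Binomial$(k_X,\mu)$. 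Your closing remark that this departure ``vanishes in the asymptotic limit'' is not justified: the total variation distance between the two laws tends to one (it equals one minus the probability that the Binomial count hits $k_Z$ exactly, which is $\Theta(1/\sqrt{n})$), and transferring a vanishing error probability from the i.i.d.\ model to the fixed-composition model by conditioning costs a factor $\Theta(\sqrt{n})$, which cannot be absorbed since the achievability hypothesis gives no rate of convergence. This mismatch is precisely why the paper works in the pruning direction, where the conditional sub-problem automatically lies in the SS1--SS2 family for $p_Z$ (only convergence of the ratios $k'/n'$, $m/n'$ is needed), rather than in the augmenting direction, where the exact i.i.d.\ law must be reproduced.

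Second, the false-alarm step is based on an incorrect claim: conditioned on $(\bX,\hat S_{p_X})$, the set $T^c$ is \emph{not} uniform over size-$(n-n')$ subsets of $S^*$. The posterior over which positions of $S^*$ carry $p_W$-values is weighted by the likelihood ratios $dp_W/dp_U^Z$ evaluated at the observed values, and $\hat S_{p_X}$ is correlated with those values; if, say, $p_U^Z$ puts its mass near the detection threshold while $p_W$ puts its mass on large values, the estimator detects $W$-positions preferentially, $|\hat S_{p_X}\cap T|$ falls well below the proportional share, and your inequality $|\hat S_{p_Z}|\ge (k_Z/k_X)\,\mathrm{TP}+|\mathrm{FA}|-o(n)$ fails. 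The conclusion $\mathrm{FAR}_Z\le\tilde D$ can instead be obtained deterministically, but only after normalizing $|\hat S_{p_X}|=k_X$ (then $\#\mathrm{FA}=\#\mathrm{MD}\le Dk_X$ and $|\hat S_{p_Z}|\ge \#\mathrm{FA}+k_Z-\#\mathrm{MD}= k_Z$); without that normalization one only gets $\#\mathrm{FA}\le \frac{D}{1-D}k_X$, which overshoots $\tilde D$. The paper makes exactly this cardinality normalization for the optimal genie-aided estimator, whereas your reduction post-processes a given estimator, and padding up to size $k_X$ is harmless but trimming down is not, so this step also needs an argument you have not supplied.
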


Combining Theorem~\ref{thm:LB_gen1} and Lemma \ref{prop:genie} gives the first main result of this paper.  This result overcomes the weakness of Theorem~\ref{thm:LB_gen1} and characterizes the difficulty of recovery when the desired distortion $D$ is small. 

\begin{theorem}\label{thm:LB_gen2}
Under Assumptions SS1-SS2 and M1-M3, a distortion $D$ is not achievable for the tuple $(\rho,p_X,\snr)$ if there exists a tuple $(\tilde{D}, \tilde{\rho},p_Z,\tilde{\snr})$ satisfying the assumptions of Lemma~\ref{prop:genie} such that
\begin{align}
\min(1,\tilde{\rho}) \log( 1+ \max(1,\tilde{\rho}) V_Z \, \tilde{\snr})  < 2 R(\tilde{D};\kappa_Z). \label{eq:LB_gen2}
\end{align}
\end{theorem}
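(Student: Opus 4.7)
The statement is a direct composition of the two preceding results: Lemma~\ref{prop:genie} (the genie argument) and Theorem~\ref{thm:LB_gen1} (the Fano-based bound). The plan is to apply the Fano bound to the auxiliary ``tilde'' problem defined by $p_Z$ and the scaled parameters $(\tilde{D}, \tilde{\rho}, \tilde{\snr})$, and then transfer the resulting non-achievability back to the original problem via Lemma~\ref{prop:genie}.

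First I would apply Theorem~\ref{thm:LB_gen1} with $(\tilde{\rho}, p_Z, \tilde{\snr})$ in place of $(\rho, p_X, \snr)$ and target distortion $\tilde{D}$ in place of $D$. The measure $p_Z$ has sparsity rate $\kappa_Z \in (0, \kappa_X]$, no mass at zero on its nonzero part, and finite variance $V_Z$, so it induces a valid sparse signal distribution in the sense of Assumptions SS1--SS2 (up to the mild generalization that $\kappa_Z$ may not lie in $(0,1/2)$; but Theorem~\ref{thm:LB_gen1} only uses $V_Z < \infty$ and the binary-rate interpretation $R(\tilde{D};\kappa_Z)$, so its proof goes through verbatim). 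The hypothesis~\eqref{eq:LB_gen2} is exactly the Fano condition~\eqref{eq:LB_gen1} instantiated for the tilde-system, so Theorem~\ref{thm:LB_gen1} yields that $\tilde{D}$ is not achievable for $(\tilde{\rho}, p_Z, \tilde{\snr})$.

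Second, I would invoke Lemma~\ref{prop:genie}. Because $\tilde{D}$, $\tilde{\rho}$, and $\tilde{\snr}$ are constructed from $(D, \rho, \snr)$ using exactly the scaling $(1-\kappa_Z)/(1-\kappa_X)$ prescribed by the lemma, and because $(p_X, p_Z)$ satisfy the dominance conditions~\eqref{eq:kZ_bound}--\eqref{eq:pZ_bound} by hypothesis, Lemma~\ref{prop:genie} asserts that non-achievability of $\tilde{D}$ for the auxiliary system implies non-achievability of $D$ for $(\rho, p_X, \snr)$. Chaining these two implications gives the conclusion.

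The proof is essentially a one-line composition, so no substantive obstacle arises. The only subtlety worth verifying is directional: the hypothesis of Theorem~\ref{thm:LB_gen2} supplies the Fano condition on the tilde-tuple (yielding unachievability of $\tilde{D}$), and Lemma~\ref{prop:genie} is then applied in its stated ``if $\tilde{D}$ is unachievable then $D$ is unachievable'' direction — no contrapositive or inversion of the scaling is required. A secondary check is that Theorem~\ref{thm:LB_gen1}'s proof (deferred to Appendix~\ref{sec:thm:LB_gen1_proof}) does not implicitly rely on $\kappa \le 1/2$; inspecting that proof to confirm this is the one minor loose end to tie down.
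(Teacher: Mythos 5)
Your proof is correct and matches the paper's own argument, which obtains Theorem~\ref{thm:LB_gen2} precisely by applying Theorem~\ref{thm:LB_gen1} to the genie-reduced tuple $(\tilde{D},\tilde{\rho},p_Z,\tilde{\snr})$ and then invoking Lemma~\ref{prop:genie} in its stated direction. Your one flagged loose end is moot: condition~\eqref{eq:kZ_bound} gives $\kappa_Z \le \kappa_X < 1/2$, so the tilde-problem satisfies SS1 as stated and no generalization of Theorem~\ref{thm:LB_gen1} is needed.
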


To understand the implications of Theorem~\ref{thm:LB_gen2} it is useful to consider the following simplification.  First, observe that we can parameterize the pair $(D,\tilde{D})$ in terms of the ratio $D' = D/\tilde{D}$. Next, let $p_Z$ be the distribution that minimizes $\bE[Z^2]$ subject to the constraints \eqref{eq:kZ_bound} and \eqref{eq:pZ_bound} with $\kappa_Z = \kappa_X D'/(1- \kappa_X D')$.  As a simple exercise, it can then be verified that
\begin{align}
V_Z \, \tilde{\snr} = P(D';p_X) \, \snr \label{eq:V_Z_simplified}
\end{align}
where
\begin{align}
P(D';p_X) = \int_0^\infty\! \! \max\!\big( \Pr[X^2 > u] - (1\!-\!D')\kappa,0\big) du. \label{eq:P_D}
\end{align}
The function $P(D';p_X)$ corresponds to the average power of the smallest fraction $D'$ of nonzero entries and has been studied previously in the analysis of maximum likelihood estimation (see \cite{RG_IEEE_12}). It is monotonically increasing in $D'$ with $P(0;p_X) = 0$ and $P(D';p_X) > 0$ for any $D' >0$. 

Finally, using the same simplification that led to \eqref{eq:LB_simple} and maximizing over the ratio $D'$ gives the following result. 

\begin{cor}\label{cor:LB_gen2a}
Under Assumptions SS1-SS3 and M1-M3, the sampling rate-distortion function $\rho^*$ obeys
\begin{align}
\rho^* \ge \max_{D' \in [D,1]} \frac{2 (1 - \kappa +\kappa D' ) R\big( \frac{D}{D'} ; \frac{\kappa D'}{1 -\kappa + \kappa D' } \big) }{\log(1+P(D'; p_X)\, \snr)}. \label{eq:LB_gen2a}
\end{align}
\end{cor}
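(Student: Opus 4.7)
The plan is to apply Theorem~\ref{thm:LB_gen2} with a carefully chosen auxiliary distribution $p_Z$ and then simplify the resulting necessary condition via the same concavity-of-$\log$ argument that produced \eqref{eq:LB_simple} from Theorem~\ref{thm:LB_gen1}. First I would parameterize $\tilde{D} = D/D'$ with $D' \in [D,1]$ and, for each such $D'$, take the auxiliary sparsity rate $\kappa_Z = \kappa D'/(1-\kappa+\kappa D')$. A short algebraic check shows $0 < \kappa_Z \le \kappa$, that the scaling factor $c := (1-\kappa_Z)/(1-\kappa)$ implicit in the reparameterization of Lemma~\ref{prop:genie} equals $1/(1-\kappa+\kappa D')$, and that the resulting $\tilde{D} = c (\kappa/\kappa_Z) D$ is consistent with \eqref{eq:tilde_D}.

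For each such pair $(D',\kappa_Z)$, I would then choose $p_Z$ to minimize $\bE_{p_Z}[Z^2]$ subject to the constraints of Lemma~\ref{prop:genie}. Writing $p_X = (1-\kappa)\delta_0 + \kappa p_U$ and $p_Z = (1-\kappa_Z)\delta_0 + \kappa_Z p_V$, constraint \eqref{eq:pZ_bound} is exactly the statement that $dp_V/dp_U \le 1/D'$ on $\mathbb{R}\setminus\{0\}$. The minimum is attained by the water-filling allocation $dp_V/dp_U = (1/D') \one\{x^2 \le t\}$, where $t$ is chosen so that $\Pr[X^2 \le t \mid X \ne 0] = D'$, and direct integration against the definition of $P(D';p_X)$ in \eqref{eq:P_D} yields the identity $V_Z \cdot \tilde{\snr} = P(D';p_X) \cdot \snr$ recorded in \eqref{eq:V_Z_simplified}.

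It remains to simplify the condition \eqref{eq:LB_gen2}. The same concavity-of-$\log$ argument used to pass from Theorem~\ref{thm:LB_gen1} to \eqref{eq:LB_simple} converts \eqref{eq:LB_gen2} into the weaker necessary condition $\tilde{\rho} \log(1+V_Z \tilde{\snr}) < 2 R(\tilde{D};\kappa_Z)$, so Theorem~\ref{thm:LB_gen2} forces $\tilde{\rho} \ge 2R(\tilde{D};\kappa_Z)/\log(1+V_Z\tilde{\snr})$, and hence $\rho = \tilde{\rho}/c \ge 2 R(\tilde{D};\kappa_Z)/[c \log(1+V_Z\tilde{\snr})]$. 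Substituting the closed-form expressions for $\tilde{D}$, $1/c$, $\kappa_Z$, and $V_Z\tilde{\snr}$ yields a separate necessary lower bound on $\rho$ for each $D' \in [D,1]$; maximizing over $D'$ produces exactly \eqref{eq:LB_gen2a}.

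The main technical obstacle is the water-filling step, specifically establishing the identity $V_Z\tilde{\snr} = P(D';p_X)\snr$. Two minor subtleties need care: possible atoms of the law of $X^2$ at the threshold $t$, which are handled by a standard randomized-truncation argument on $\{X^2 = t\}$; and the distinction between the variance $V_Z$ and the raw second moment $\bE[Z^2]$, which is resolved either by a centering assumption on $p_U$ or by observing that Theorem~\ref{thm:LB_gen1}, and hence Theorem~\ref{thm:LB_gen2}, is invariant under mean-shifts of $\bX$, so that only the central second moment ever enters the final bound.
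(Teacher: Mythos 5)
Your proposal follows the paper's own derivation essentially verbatim: apply Theorem~\ref{thm:LB_gen2} with the genie parameterization $\tilde{D}=D/D'$ and $\kappa_Z=\kappa D'/(1-\kappa+\kappa D')$ (the choice actually consistent with the corollary; the paper's in-text formula $\kappa_Z=\kappa D'/(1-\kappa D')$ is evidently a typo), take the second-moment-minimizing truncated $p_Z$ to obtain the identity \eqref{eq:V_Z_simplified}, then use the same concavity-of-the-logarithm simplification that gave \eqref{eq:LB_simple} and maximize over $D' \in [D,1]$. Your treatment of the threshold atom and of the variance-versus-second-moment issue is at least as careful as the paper, which dismisses the verification of \eqref{eq:V_Z_simplified} as a simple exercise, so there is no gap relative to the paper's own argument.
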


The next section shows that the right-hand side of \eqref{eq:LB_gen2a} tends to infinity as $D \rw 0$. Therefore, one important contribution of Theorem \ref{thm:LB_gen2} is that it is not possible to have a vanishing fraction of errors if both the sampling rate and SNR are finite. 

\subsection{Low-Distorion Behavior}

We now investigate the low-distortion behavior of Theorem~\ref{thm:LB_gen2}. The following result follows directly from Corollary~\ref{cor:LB_gen2a}. The proof is given in Appendix~\ref{sec:low_D_appendix}. 

\begin{cor}\label{prop:lowD_SNR} Fix any $\alpha > 1$. Under assumptions SS1-SS2 and M1-M3, the sampling rate-distortion function $\rho^*$ obeys
\begin{align}
\liminf_{D \rw 0} \rho^* \cdot  \frac{P(\alpha D;p_X)}{ 2 (\alpha -1)  \kappa D \log(1/D) } \ge \frac{1}{\snr}.
\end{align}
\end{cor}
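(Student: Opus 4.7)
The plan is to apply Corollary~\ref{cor:LB_gen2a} with the specific choice $D' = \alpha D$ (valid once $D \le 1/\alpha$) and track the leading-order asymptotics of numerator and denominator separately as $D \rw 0$. Since the maximum over $D' \in [D,1]$ dominates any particular choice, this substitution yields
\begin{align*}
\rho^* \ge \frac{ 2(1 - \kappa + \kappa \alpha D)\, R\!\left( \tfrac{1}{\alpha} ; \tfrac{\kappa \alpha D}{1-\kappa + \kappa \alpha D}\right) }{ \log\!\bigl(1 + P(\alpha D; p_X)\, \snr\bigr) }.
\end{align*}

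For the denominator, I would use $P(\alpha D; p_X) \rw 0$ as $D \rw 0$ (immediate from the definition \eqref{eq:P_D} together with $P(0; p_X) = 0$ and monotonicity). Then $\log(1 + P(\alpha D; p_X)\snr) = P(\alpha D; p_X)\,\snr \cdot (1 + o(1))$.

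The substantive step is expanding the numerator. Writing $p = \kappa \alpha D/(1-\kappa+\kappa\alpha D)$ and $q = \kappa D/(1-\kappa)$, one checks directly from the definition \eqref{eq:R_D} that
\begin{align*}
R\!\left(\tfrac{1}{\alpha}; p\right) = H(p) - p\, H(1/\alpha) - (1-p)\, H(q),
\end{align*}
since $pD_Z/(1-p) = q$ with $D_Z = 1/\alpha$. Note $p \to 0$ and $q \to 0$ with $p = \alpha q (1+o(1))$. Using $H(r) = -r\log r + r + O(r^2)$ for small $r$, the two entropy terms combine as
\begin{align*}
H(p) - (1-p) H(q) &= -\alpha q \log(\alpha q) + \alpha q + q \log q - q + O(q) \\
&= (\alpha-1)\, q \log(1/q) + O(q),
\end{align*}
while $p\, H(1/\alpha) = O(q)$ is absorbed into the $O(q)$ error. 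Since $q \log(1/q) = [\kappa/(1-\kappa)]\, D \log(1/D) + O(D)$, we obtain
\begin{align*}
R\!\left(\tfrac{1}{\alpha}; p\right) = (\alpha-1)\,\tfrac{\kappa}{1-\kappa}\, D \log(1/D) \cdot (1 + o(1)).
\end{align*}
Multiplying by the prefactor $2(1-\kappa + \kappa \alpha D) = 2(1-\kappa)(1+o(1))$, the $(1-\kappa)$ factors cancel and the numerator becomes $2(\alpha-1)\kappa\, D \log(1/D)\,(1+o(1))$.

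Combining the two asymptotics gives
\begin{align*}
\rho^* \ge \frac{2(\alpha-1)\kappa\, D \log(1/D)}{P(\alpha D; p_X)\, \snr} \cdot (1+o(1)),
\end{align*}
and rearranging and taking $\liminf_{D \rw 0}$ yields the claim. The main obstacle, which I expect to be only a careful bookkeeping issue rather than a conceptual one, is verifying that the $p\log p$ and $q\log q$ singularities cancel to expose the $(\alpha-1)$ factor — the choice of the optimal $D'$ (namely $D' = \alpha D$) is engineered precisely so that the two logarithmic divergences have matching coefficients up to the desired $\alpha - 1$ difference.
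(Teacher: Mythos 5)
Your proposal is correct and follows essentially the same route as the paper: evaluate Corollary~\ref{cor:LB_gen2a} at $D' = \alpha D$, control the denominator by $\log(1+x)\approx x$, and show the numerator is asymptotic to $2(\alpha-1)\kappa D\log(1/D)$. The only cosmetic differences are that the paper uses the inequality $\log(1+x)\le x$ (avoiding any appeal to $P(\alpha D;p_X)\rw 0$, which your version needs but which does hold by continuity of $P$ at $0$) and obtains the numerator limit from the single fact $\lim_{p\rw 0} H(cp)/(p\log(1/p))=c$ rather than your explicit Taylor expansion.
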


By the continuity of $P(D;p_X)$ over the interval $D \in [0,1)$, one implication of Corollary~\ref{prop:lowD_SNR} is that for any distribution $p_X$, there exists a positive constant $C$ such that the sampling rate-distortion function obeys
\begin{align}
\rho^* \ge C \cdot \frac{D \log(1/D)}{P(D;p_X) \cdot \snr}.
\end{align}

To characterize limiting behavior of the function $P(D;pX)$ we consider two different signal classes:
\begin{itemize}
\item {\em Bounded: } We use $\mathcal{P}_\text{Bounded}(\kappa,B)$ to denote the class of all distributions $p_X$ with sparsity rate $\kappa$, second moment equal to one, and 
$$\Pr[ |X| < B | X \ne 0] = 0$$
for some {\em lower bound} $B >0$. Due to the second moment constraint, the lower bound $B$ cannot exceed $1/\sqrt{\kappa}$.

\item {\em Polynomial Decay: } We use $\mathcal{P}_\text{Poly.}(\kappa,L,\tau)$ to denote the class of all distributions $p_X$ with sparsity rate $\kappa$, second moment equal to one, and $$\lim_{x \rw 0} \frac{\Pr[ |X| \le x | X \ne 0]}{x^L} = \tau$$ for some {\em polynomial decay rate} $L >0$ and limiting constant $\tau \in (0,\infty)$. 
\end{itemize}

The bounded class corresponds to the setting where the nonzero entries in $\bx$ have a fixed lower bound $B$ on their magnitudes, independent of the vector length $n$. By contrast, the polynomial decay class corresponds to the setting where the magnitude of the $\lceil \beta \,k\rceil$'th smallest nonzero entry is proportional to $\beta^{1/L}$ for small $\beta$. Note that in the case of polynomial decay, a vanishing fraction of the nonzero entries are tending to zero as the vector length $n$ becomes large.

Combining Corollary~\ref{prop:lowD_SNR} with analysis of $P(D;p_X)$ given in \cite{RG_IEEE_12} leads to the following result. The proof is given in Appendix~\ref{sec:low_D_appendix}. 

\begin{cor}\label{cor:lowD}
Under assumptions SS1-SS2 and M1-M3 the sampling rate-distortion function obeys the following asymptotic lower bounds:
\begin{enumerate}[(a)]
\item 
If $p_X \in \mathcal{P}_\text{Bounded}(\kappa,B)$, then
\begin{align}
\liminf_{D \rw 0} \frac{\rho^*}{ \log(1/D)} \ge \frac{2}{B^2 \cdot \snr}.
\end{align}
\item 
If $p_X \in \mathcal{P}_\text{Poly.}(\kappa,L,\tau)$, then
\begin{align}
\liminf_{D\rw 0}  \frac{\rho^*}{D^{2/L} \log(1/D)} &= \Big(\frac{\tau}{1+ L/2} \Big)^{-2/L}  \cdot \frac{2}{\snr}. \label{eq:P_D_poly_decay}
\end{align}
\end{enumerate}
\end{cor}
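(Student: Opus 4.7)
The plan is to apply Corollary~\ref{prop:lowD_SNR} directly, and analyze the small-$D'$ asymptotics of $P(D';p_X)$ separately for each of the two signal classes. A useful preliminary step is to establish the alternative expression
\begin{align*}
P(D';p_X) \;=\; \kappa\,\bE\bigl[U^2\,\one(U^2 \le Q_{D'})\bigr],
\end{align*}
where $U \sim p_U$ is the nonzero-value distribution and $Q_{D'}$ is the lower $D'$-quantile of $U^2$, i.e.\ $\Pr[U^2 \le Q_{D'}] = D'$. This identity follows by writing $\Pr[X^2 > u] = \kappa\,\Pr[U^2 > u]$, noting that the integrand in the definition \eqref{eq:P_D} of $P$ is positive exactly on $[0,Q_{D'}]$, and then applying Fubini. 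It expresses $P(D';p_X)$ as $\kappa$ times the average squared magnitude of the smallest $D'$-fraction of nonzero entries, which is the quantity we need to estimate in each class.

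For part (a), the bound $U^2 \ge B^2$ almost surely gives $\bE[U^2\,\one(U^2 \le Q_{D'})] \ge B^2 D'$, and since $Q_{D'} \downarrow B^2$ as $D' \downarrow 0$ (the quantile converges to the infimum of the support) we also have $\bE[U^2\,\one(U^2 \le Q_{D'})] \le Q_{D'} D'$, so $P(D';p_X)/D' \rw \kappa B^2$. Substituting $P(\alpha D;p_X) \sim \kappa B^2 \alpha D$ into Corollary~\ref{prop:lowD_SNR} yields
\begin{align*}
\liminf_{D \rw 0} \frac{\rho^*}{\log(1/D)} \;\ge\; \frac{2(\alpha-1)}{\alpha\, B^2\, \snr},
\end{align*}
and letting $\alpha \rw \infty$ gives the announced bound $2/(B^2\,\snr)$.

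For part (b), the hypothesis $\Pr[|U|\le x]/x^L \rw \tau$ translates to $\Pr[U^2 \le u] \sim \tau u^{L/2}$ as $u \rw 0$, giving $Q_{D'} \sim (D'/\tau)^{2/L}$. Integrating $u$ against the leading-order density $(L/2)\,\tau\, u^{L/2-1}$ on $[0,Q_{D'}]$ produces
\begin{align*}
P(D';p_X) \;\sim\; \kappa\,\frac{L}{L+2}\,\tau^{-2/L}\,(D')^{1+2/L} \quad\text{as}\ D' \rw 0.
\end{align*}
Inserting this into Corollary~\ref{prop:lowD_SNR} with $D' = \alpha D$ and optimizing the resulting expression over $\alpha > 1$ reduces to maximizing $(\alpha-1)/\alpha^{1+2/L}$; setting the derivative to zero gives the optimizer $\alpha^* = 1 + L/2$, and plugging this back yields the announced constant. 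The matching upper half of the equality is imported from \cite{RG_IEEE_12}.

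The only delicate point is the asymptotic analysis in part (b): the hypothesis is only a pointwise limit of $\Pr[|U| \le x]/x^L$, so some care is needed to transfer it to the integral defining $\bE[U^2\,\one(U^2 \le Q_{D'})]$. Since $Q_{D'} \rw 0$, however, all the contributing mass lies in the region where the asymptotic formula applies, so a standard argument (splitting the integral at a small threshold and using the uniform control implied by the limit) makes the asymptotic substitution rigorous without new ideas.
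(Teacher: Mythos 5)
Your proof follows essentially the same route as the paper's: apply Corollary~\ref{prop:lowD_SNR} with $D'=\alpha D$, substitute the small-$D'$ behavior of $P(\alpha D;p_X)$ for each class, and optimize over $\alpha$ (letting $\alpha\rw\infty$ in part (a), taking $\alpha^*=1+L/2$ in part (b)). The only genuine difference is that you derive the polynomial-decay asymptotics $P(D')\sim\kappa\,\tfrac{L}{L+2}\,\tau^{-2/L}(D')^{1+2/L}$ yourself via the quantile representation, whereas the paper imports it from \cite{RG_IEEE_12}; your constant is the one consistent with the optimization step that follows (and carrying the algebra through gives $\big(\tfrac{\tau}{1+L/2}\big)^{2/L}\cdot\tfrac{2}{\snr}$, matching what the paper's own appendix computation produces, while the displayed corollary shows the exponent with the opposite sign --- an inconsistency internal to the paper, not a defect of your argument). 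Two caveats. First, your identity $P(D';p_X)=\kappa\,\bE[U^2\one(U^2\le Q_{D'})]$ is exact only when $\Pr[U^2\le Q_{D'}]=D'$; if $U^2$ has an atom at the quantile (e.g.\ a point mass at $B$, precisely the case relevant to part (a)) only the needed fraction of that atom should be counted, though the asymptotic $P(D')\sim\kappa B^2 D'$ you want is still correct there by direct computation. Second, and more substantively, your claim in part (a) that $Q_{D'}\downarrow B^2$ presumes that the essential infimum of $|U|$ equals $B$; the class $\mathcal{P}_\text{Bounded}(\kappa,B)$ only guarantees $|U|\ge B$, so in general one has only the one-sided bound $P(\alpha D)\ge\alpha\kappa D B^2$, while the final inequality actually requires the reverse direction $P(\alpha D)\lesssim\alpha\kappa D B^2$, which holds only for distributions whose smallest nonzero magnitudes approach $B$. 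The paper's own proof makes exactly the same elision (it states the lower bound on $P$ and then uses it as if it were two-sided), so your argument is on par with the intended worst-case-over-the-class reading of part (a); but the delicate point is there, not in the part (b) limit-interchange you flagged, which is indeed routine.
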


Simply put, Corollary~\ref{cor:lowD} shows that the sampling rate-distortion function obeys
\begin{align}
\rho^* \ge C \cdot \log(1/D)
\end{align}
if $p_X$ is bounded and
\begin{align}
\rho^* \ge C \cdot D^{2/L} \log(1/D)
\end{align}
if $p_X$ has polynomial decay $L$. In \cite{RG_IEEE_12}, it is shown that, up to constants, these scalings are also achievable. Together, these upper and lower bounds characterize precisely how the sampling rate-distortion function increases as the desired distortion becomes small.

\section{Bounds for IID Measurement Matrices}\label{sec:iid_bounds}

This section gives stronger lower bounds for measurement matrices whose entries are i.i.d.~(Assumption M4). Unlike the bounds given in the previous section, these bounds capture the fact that the values of nonzero entries of $\bX$ are unknown. Section~\ref{sec:continuous_dist} gives an improved lower bound for settings where the nonzero entries are continuous. Section~\ref{sec:high_SNR} considers the high-SNR behavior of the bound. Section~\ref{sec:arbitrary_dist} extends the bound arbitrary distributions. 

\subsection{Improved lower bound via the entropy power inequality}\label{sec:continuous_dist}

We define the nonzero entropy power of a random variable $X \sim p_X$ to be 
\begin{align}
N_X= 
\begin{dcases}
 \frac{ \kappa \exp\left(2 h(X|X \ne 0)\right)}{2 \pi e}, & \text{if $h(X|X \ne 0)$ exists }\\
0, & \text{otherwise}
\end{dcases}
  \label{eq:N_X}
\end{align}
where $h(X|X\ne 0)$ denotes the differential entropy of the nonzero part of $p_X$. The nonzero entropy power allows us to assess the relative uncertainty about the nonzero entries. 

Our next result gives a lower bound on the achievable distortion in terms of the variance $V_X$ and the nonzero entropy power $N_X$. The proof relies heavily on the entropy power inequality and the spectral convergence of i.i.d.~random matrices and is given in Appendix~\ref{sec:thm:LB_iid_1_proof}.

\begin{theorem}\label{thm:LB_iid_1}
Under Assumptions SS1-SS3 and M1-M4, a distortion $D$ is not achievable for the tuple $(\rho,p_X,\snr)$ if\begin{align}\label{eq:LB_iid_1}
\mathcal{V}(\rho,V_X\, \snr) - \kappa \mathcal{V}_\text{LB}(\rho/\kappa,N_X\, \snr) < 2 R(D;\kappa) 
\end{align}
where 
\begin{align}
\mathcal{V}(r,\gamma) &=  r \log\big(1+ \gamma - \mathcal{F}(r,\gamma)\big) 
+ \log\big(1+r\, \gamma - \mathcal{F}(r,\gamma)\big) \nonumber \\
 & \quad -   \mathcal{F}(r,\gamma)/\gamma  \label{eq:V_funct}
\end{align}
with
\begin{align*}
\mathcal{F}(r,\gamma)  = \frac{1}{4} \left( \sqrt{\gamma\, (\sqrt{r} + 1)^2 +1} -  \sqrt{\gamma\, (\sqrt{r}-1)^2 +1} \right)^2
\end{align*}
and
\begin{align}
\mathcal{V}_{LB}(r,\gamma)=
\begin{dcases}
r\log\Big(1+\gamma \Big( \frac{1}{1-r} \Big)^{1/r- 1} \frac{1}{e}\Big),& \text{if $r < 1$}\\
\; \log\Big(1+  \gamma \frac{1}{e} \Big),& \text{if $r=1$}\\
\; \log\Big(1+\gamma \, r  \Big( \frac{r}{r-1} \Big)^{r- 1} \frac{1}{e}\Big),& \text{if $r >1$}
\end{dcases} \label{eq:V_LB_funct}.
\end{align}
\end{theorem}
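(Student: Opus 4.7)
The plan is to refine the Fano-based approach of Theorem~\ref{thm:LB_gen1} by exploiting the i.i.d.~structure of $\bA$: Marchenko--Pastur spectral convergence yields a sharp upper bound on $h(\bY\mid\bA)$, while the entropy power inequality (EPI) yields a lower bound on the genie-aided entropy $h(\bY\mid\bA,S^*)$. The extra term $\kappa\,\mathcal{V}_\text{LB}(\rho/\kappa,N_X\,\snr)$ in \eqref{eq:LB_iid_1} compared to \eqref{eq:LB_gen1} is precisely this EPI saving.

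First I would reproduce the Fano argument from Theorem~\ref{thm:LB_gen1}: if distortion $D$ is achievable, then for large $n$ we must have $I(S^*;\bY\mid\bA) \ge n R(D;\kappa) - o(n)$. I then decompose $I(S^*;\bY\mid\bA)=h(\bY\mid\bA)-h(\bY\mid\bA,S^*)$ and treat the two terms separately. For the first, since $\bX$ has i.i.d.~entries with marginal variance $V_X$ and Gaussians maximize differential entropy under a covariance constraint,
\begin{align*}
h(\bY\mid\bA)\le \tfrac{m}{2}\log(2\pi e/\snr) + \tfrac{1}{2}\,\bE\!\left[\log\det\!\big(I_m+V_X\,\snr\,\bA\bA^T\big)\right].
\end{align*}
Under Assumption M4, the empirical spectrum of $\bA^T\bA$ converges to the Marchenko--Pastur law with ratio $\rho$; computing $\int\log(1+V_X\,\snr\,\lambda)\,d\mu_\rho(\lambda)$ in closed form via its Stieltjes transform yields exactly $\mathcal{V}(\rho,V_X\,\snr)$, with the quantity $\mathcal{F}$ emerging as the usual auxiliary function tied to the MMSE integral identity.

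For the second term, conditional on $(\bA,S^*)$ we have $\bY=\bA_{S^*}\bX_{S^*}+\bW/\sqrt{\snr}$, where $\bX_{S^*}\in\mathbb{R}^{k}$ has i.i.d.~components drawn from $p_U$ with per-coordinate entropy power $N_X/\kappa$. The vector EPI in $\mathbb{R}^m$ gives
\begin{align*}
\tfrac{1}{2\pi e}\exp\!\Big(\tfrac{2}{m}h(\bY\mid\bA,S^*)\Big)\ge \tfrac{1}{2\pi e}\exp\!\Big(\tfrac{2}{m}h(\bA_{S^*}\bX_{S^*}\mid\bA_{S^*})\Big) + \tfrac{1}{\snr}.
\end{align*}
The heart of the argument is an auxiliary lemma lower-bounding the conditional entropy power of $\bA_{S^*}\bX_{S^*}$ in terms of $N_X$ and the singular values of $\bA_{S^*}$, followed by averaging over the i.i.d.~matrix using $m/k\to\rho/\kappa$. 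The three-case form of $\mathcal{V}_\text{LB}$ reflects whether $\bA_{S^*}$ is wide ($\rho/\kappa<1$), square, or tall ($\rho/\kappa>1$). Combining the bounds with the Fano inequality and passing to the limit $n\to\infty$ yields $\tfrac{1}{2}\mathcal{V}(\rho,V_X\,\snr)-\tfrac{\kappa}{2}\mathcal{V}_\text{LB}(\rho/\kappa,N_X\,\snr)\ge R(D;\kappa)$, which is the contrapositive of \eqref{eq:LB_iid_1}.

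The principal obstacle is the EPI lower bound on $h(\bA_{S^*}\bX_{S^*}\mid\bA_{S^*})$ in the compressed regime $\rho/\kappa<1$. When $\bA_{S^*}$ is tall, a thin SVD splits the observation into a purely-noise subspace and a signal subspace on which $N(\bA_{S^*}\bX_{S^*}\mid\bA_{S^*})$ factors cleanly as $(\det \bA_{S^*}^T\bA_{S^*})^{1/k}\,N_X/\kappa$, and averaging $\bE[\log\det \bA_{S^*}^T\bA_{S^*}]$ via Marchenko--Pastur produces the $r>1$ branch of \eqref{eq:V_LB_funct}. When $\bA_{S^*}$ is wide, however, the image sits in $\mathbb{R}^m$ with $m<k$ and the orthogonal rotation in the SVD mixes the i.i.d.~coordinates of $\bX_{S^*}$, so one must instead apply a componentwise EPI together with a random-matrix computation of $\bE[\log\det \bA_{S^*}\bA_{S^*}^T]$ and a Jensen step to produce the $\big(1/(1-r)\big)^{1/r-1}/e$ factor. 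Controlling the $o(n)$ remainders from spectral convergence uniformly, and matching the three branches of $\mathcal{V}_\text{LB}$ continuously at $r=1$, complete the technical picture.
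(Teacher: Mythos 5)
Your proposal follows essentially the same route as the paper's proof: Fano's inequality, the decomposition $I(S^*;\bY\mid\bA)=I(\bX;\bY\mid\bA)-I(\bX;\bY\mid S^*,\bA)$ (equivalent to your differential-entropy form), Gaussian maximum entropy plus the Marchenko--Pastur/Verd\'u--Shamai limit for $\mathcal{V}$, and the entropy power inequality combined with the asymptotic geometric mean of the eigenvalues of $\bA_{S^*}$, with the wide/tall case split, for $\mathcal{V}_\text{LB}$. The auxiliary lemma you flag as the principal obstacle in the wide case is precisely the Zamir--Feder generalization of the EPI, which the paper invokes to obtain $N(A_{S^*}\bX_{S^*})\ge (N_X/\kappa)\det(A_{S^*}A_{S^*}^T)^{1/m}$ before taking the spectral limit.
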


\begin{remark}
In the special case where the nonzero part of the distribution $p_X$ is Gaussian, the function $\mathcal{V}_\text{LB}(r,\gamma)$ in the second term on the left-hand side of \eqref{eq:LB_iid_1} can be replaced with the function $\mathcal{V}(r,\gamma)$, thus providing a slightly stronger condition. 
\end{remark}

Combining Theorem~\ref{thm:LB_iid_1} with bounds on the functions $\mathcal{V}(r,\gamma)$ and $\mathcal{V}_\text{LB}(r,\gamma)$ given in Appendix~\ref{sec:random_matrix}, gives a simplified lower bound on the sampling rate-distortion function:
\begin{align}
\rho^* > \frac{\min(\rho^*,\kappa) \log(1+(N_X/e)\, \snr ) + 2R(D,\kappa)  }{\log(1+V_X \, \snr) }. \label{eq:rho_LB_iid_simple}
\end{align}
Note that this bound is similar to \eqref{eq:LB_simple}, except that there is an additional term on the right-hand side.

\subsection{High-SNR behavior}\label{sec:high_SNR}
The key improvement of Theorem~\ref{thm:LB_iid_1} is that the lower bound on the distortion remains bounded away from zero for all SNR. To illustrate this point, we first consider the infinite SNR limit of the lower bound on the distortion. Since the achievable distortion is non-increasing in the SNR, this limit gives a valid lower bound for any SNR.

\begin{cor}\label{thm:high_SNR_1}
Under Assumptions SS1-SS3 and M1-M4, a distortion $D$ is not achievable for the tuple $(\rho,p_X,\snr)$ if $\rho < \kappa$ and
\begin{align}
&\rho \log \left( \frac{V_X}{N_X} \right)  + (1-\rho) \log \Big( \frac{1}{1-\rho} \Big) - (\kappa - \rho) \log \Big( \frac{\kappa}{\kappa - \rho} \Big) \nonumber
\\& < 2 R(D;\kappa). \label{eq:inf_SNR_lim}
\end{align}
\end{cor}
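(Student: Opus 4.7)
The plan is to derive Corollary~\ref{thm:high_SNR_1} as the $\snr \to \infty$ limit of Theorem~\ref{thm:LB_iid_1}, combined with the monotonicity of achievability in $\snr$. First, achievability is monotone in $\snr$: if $D$ is achievable at $\snr = s_0$, then it is achievable at every $s \ge s_0$, since the $s_0$-problem is obtained from the $s$-problem by adding independent Gaussian noise of variance $\tfrac{1}{s_0} - \tfrac{1}{s}$ to $\bY$ and then running the $s_0$-algorithm. Equivalently, nonachievability for arbitrarily large $\snr$ implies nonachievability for every finite $\snr$. It therefore suffices to show that, under the hypothesis of the corollary, the condition of Theorem~\ref{thm:LB_iid_1} is satisfied for all sufficiently large $\snr$.

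The core step is to compute the limit
\[
L \defeq \lim_{\snr \to \infty}\!\left[\,\mathcal{V}(\rho, V_X \snr) - \kappa\, \mathcal{V}_\text{LB}(\rho/\kappa, N_X \snr)\,\right].
\]
For fixed $r < 1$, the function $\mathcal{F}(r,\gamma)$ may be rewritten as
\[
\mathcal{F}(r,\gamma) = \tfrac{1}{2}\!\left((r+1)\gamma + 1 - \sqrt{(r-1)^2 \gamma^2 + 2(r+1)\gamma + 1}\right),
\]
and a Taylor expansion of the square root about $\gamma = \infty$ yields $\mathcal{F}(r,\gamma) = r\gamma - \frac{r}{1-r} + o(1)$. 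Substituting this into \eqref{eq:V_funct} gives $1 + r\gamma - \mathcal{F}(r,\gamma) \to \frac{1}{1-r}$, $1 + \gamma - \mathcal{F}(r,\gamma) = (1-r)\gamma + \frac{1}{1-r} + o(1)$, and $\mathcal{F}(r,\gamma)/\gamma \to r$, and hence
\[
\mathcal{V}(r, \gamma) = r \log \gamma - (1-r)\log(1-r) - r + o(1).
\]
The same leading-order expression follows directly from the $r<1$ branch of \eqref{eq:V_LB_funct} for $\mathcal{V}_\text{LB}(r,\gamma)$, since $r\log[(1-r)^{-(1/r-1)}/e] = -(1-r)\log(1-r) - r$. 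Applying these two expansions with $(r,\gamma)=(\rho, V_X\snr)$ and $(r,\gamma)=(\rho/\kappa, N_X\snr)$ respectively — the hypothesis $\rho < \kappa$ ensures $\rho/\kappa < 1$ — the divergent $\rho\log\snr$ terms and the $-\rho$ constants cancel between the two contributions, leaving exactly
\[
L = \rho \log(V_X/N_X) + (1-\rho)\log\tfrac{1}{1-\rho} - (\kappa - \rho)\log\tfrac{\kappa}{\kappa - \rho},
\]
which is the left-hand side of \eqref{eq:inf_SNR_lim}.

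Under the hypothesis $L < 2R(D;\kappa)$, continuity then implies that the condition of Theorem~\ref{thm:LB_iid_1} holds for all sufficiently large $\snr$, so $D$ is nonachievable at those SNR values; by the monotonicity of achievability noted in the first paragraph, $D$ is therefore nonachievable for every $\snr$. The main technical delicacy is the expansion of $\mathcal{F}(r,\gamma)$: the two square-root terms cancel to leading order, and it is precisely the subleading $O(1)$ correction that produces the finite limit $L$, so the expansion must be carried one order beyond the naive $\sqrt{1+x}\sim 1$ estimate. Once that has been handled, matching the asymptotic forms of $\mathcal{V}$ and $\mathcal{V}_\text{LB}$ is straightforward.
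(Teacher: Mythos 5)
Your proposal is correct and follows essentially the same route as the paper: take the $\snr \to \infty$ limit of the left-hand side of Theorem~\ref{thm:LB_iid_1} and invoke monotonicity of achievability in $\snr$. The only difference is that you expand $\mathcal{F}(r,\gamma)$ directly to get the large-$\gamma$ asymptotics of $\mathcal{V}(r,\gamma)$, whereas the paper cites Lemma~\ref{lem:VLB_comp} to replace $\mathcal{V}$ by $\mathcal{V}_\text{LB}$ before computing the same limit; your expansion (and the resulting cancellation yielding the stated left-hand side) checks out.
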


The proof of Corollary~\ref{thm:high_SNR_1} follows directly from the infinite SNR limit of Theorem~\ref{thm:LB_iid_1} and is  given in Appendix~\ref{sec:high_SNR_appendix}. In \cite{reeves_thesis}, it is shown that the same result can be obtained by direct analysis of the noiseless setting. 

Using the fact that the left-hand side of \eqref{eq:inf_SNR_lim} is increasing in $\rho$ gives a simple lower bound on the sampling rate-distortion function.
\begin{cor}\label{thm:high_SNR_1b}
 Consider Assumptions SS1-SS3 and M1-M4. If the ratio $N_X/V_X$ is large relative to the desired distortion $D$, i.e. if
\begin{align}
 \kappa \log\Big(\frac{V_X}{N_X}\Big) +  (1-\kappa) \log\Big(\frac{1}{1-\kappa}\Big) < 2 R(D;\kappa)  \label{eq:DinfLim},
\end{align}
 then sampling rate-distortion function obeys $\rho^* \ge \kappa$ for all SNR. 
 \end{cor}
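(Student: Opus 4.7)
The plan is to deduce Corollary~\ref{thm:high_SNR_1b} directly from Corollary~\ref{thm:high_SNR_1} by showing that the left-hand side of \eqref{eq:inf_SNR_lim}, viewed as a function of $\rho$, is monotonically non-decreasing on $(0,\kappa)$ and has limit equal to the left-hand side of \eqref{eq:DinfLim} as $\rho \uparrow \kappa$. Once both facts are in hand, the hypothesis \eqref{eq:DinfLim} will imply the strict inequality \eqref{eq:inf_SNR_lim} for every $\rho \in (0,\kappa)$, so Corollary~\ref{thm:high_SNR_1} rules out achievability of $D$ at any such $\rho$, yielding $\rho^* \ge \kappa$.

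First, I introduce
\begin{align*}
f(\rho) \defeq \rho \log\!\left(\frac{V_X}{N_X}\right) + (1-\rho)\log\!\left(\frac{1}{1-\rho}\right) - (\kappa-\rho)\log\!\left(\frac{\kappa}{\kappa-\rho}\right).
\end{align*}
Since $t\log(1/t) \to 0$ as $t \downarrow 0$, the last term vanishes in the limit $\rho \uparrow \kappa$, giving
\begin{align*}
\lim_{\rho \uparrow \kappa} f(\rho) = \kappa \log\!\left(\frac{V_X}{N_X}\right) + (1-\kappa)\log\!\left(\frac{1}{1-\kappa}\right),
\end{align*}
which is precisely the left-hand side of \eqref{eq:DinfLim}.

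Next, a direct differentiation gives
\begin{align*}
f'(\rho) = \log\!\left(\frac{V_X}{N_X}\right) + \log\!\left(\frac{(1-\rho)\kappa}{\kappa-\rho}\right).
\end{align*}
Both logarithms are non-negative on $(0,\kappa)$: the first because the entropy-power inequality applied to the nonzero part yields $N_X \le \kappa\,\text{Var}(X\mid X\ne 0) \le V_X$ (here I use that $V_X = \kappa E[X^2 \mid X\ne 0] - \kappa^2 (E[X \mid X\ne 0])^2 \ge \kappa\,\text{Var}(X\mid X\ne 0)$ since $\kappa\le 1$), and the second because $g(\rho)=(1-\rho)\kappa/(\kappa-\rho)$ satisfies $g(0)=1$ and $g'(\rho) = \kappa(1-\kappa)/(\kappa-\rho)^2 > 0$ on $(0,\kappa)$. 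Hence $f'(\rho) \ge 0$ throughout $(0,\kappa)$.

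The main obstacle is the monotonicity calculation and, in particular, establishing $V_X \ge N_X$ cleanly; everything else is either a limit computation or a direct appeal to Corollary~\ref{thm:high_SNR_1}. With monotonicity in hand, hypothesis \eqref{eq:DinfLim} gives $f(\rho) \le \lim_{\rho'\uparrow\kappa} f(\rho') < 2R(D;\kappa)$ for every $\rho \in (0,\kappa)$. Corollary~\ref{thm:high_SNR_1} then excludes achievability of $D$ at every $\rho < \kappa$, regardless of $\snr$, and therefore $\rho^* \ge \kappa$ for all SNR.
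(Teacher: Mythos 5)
Your proposal is correct and follows essentially the same route as the paper: the paper's proof of this corollary is precisely the observation that the left-hand side of \eqref{eq:inf_SNR_lim} is non-decreasing in $\rho$ on $(0,\kappa)$ with limit equal to the left-hand side of \eqref{eq:DinfLim} as $\rho \uparrow \kappa$, which you verify explicitly (including the needed fact $N_X \le \kappa\,\mathrm{Var}(X \mid X \ne 0) \le V_X$) before invoking Corollary~\ref{thm:high_SNR_1}. The only nit is terminological: the bound $N_X \le \kappa\,\mathrm{Var}(X\mid X\ne 0)$ follows from the maximum-entropy property of the Gaussian (entropy power is at most variance), not from the entropy power inequality proper.
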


The next result gives a precise characterization of the high-SNR behavior of Theorem~\ref{thm:LB_iid_1}. The proof is given in Appendix~\ref{sec:high_SNR_appendix}

\begin{cor}\label{thm:high_SNR_2}
Under the assumptions of Corollary~\ref{thm:high_SNR_1b},
the sampling rate-distortion function obeys
\begin{align}
&\liminf_{\snrs \rw \infty} (\rho^* - \kappa) \log(\snr) \ge  \nonumber \\
&2 R(D;\kappa) - \kappa \log\Big(\frac{V_X}{N_X}\Big) - (1-\kappa) \log\Big(\frac{1}{1-\kappa}\Big).
\end{align}
\end{cor}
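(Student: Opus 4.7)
The plan is to apply Theorem~\ref{thm:LB_iid_1} at a sampling rate slightly above $\kappa$ chosen to approach $\kappa$ at rate $1/\log\snr$. Set $A := \kappa\log(V_X/N_X) + (1-\kappa)\log(1/(1-\kappa))$; this is the left-hand side of \eqref{eq:DinfLim} and is strictly less than $2R(D;\kappa)$ by the standing hypothesis of Corollary~\ref{thm:high_SNR_1b}. Fix $\epsilon > 0$ and take
\begin{equation*}
\eta \;=\; \frac{2R(D;\kappa) - A - \epsilon}{\log \snr}, \qquad \rho \;=\; \kappa + \eta.
\end{equation*}
If condition \eqref{eq:LB_iid_1} holds at $(\rho, \snr)$ for all sufficiently large $\snr$, Theorem~\ref{thm:LB_iid_1} supplies $\rho^* \ge \rho$, so $(\rho^* - \kappa)\log\snr \ge 2R(D;\kappa) - A - \epsilon$ eventually; sending $\epsilon \to 0^+$ then yields the claimed $\liminf$ bound.

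The core computation is a joint asymptotic expansion of the left-hand side of \eqref{eq:LB_iid_1} as $\snr \to \infty$ and $\eta \to 0^+$. The high-SNR asymptotics of $\mathcal{F}$ collected in Appendix~\ref{sec:random_matrix} give
\begin{equation*}
\mathcal{V}(\rho, V_X\snr) \;=\; \rho\log(V_X\snr) + (1-\rho)\log(1/(1-\rho)) - \rho + O(1/\snr),
\end{equation*}
uniformly on compact subsets of $\rho \in (0,1)$. For the other term, the $r > 1$ branch of $\mathcal{V}_\text{LB}$ at $r = 1 + \eta/\kappa$ uses the factor $(r/(r-1))^{r-1} = 1 + \delta\log(1/\delta) + O\!\left(\delta^2(\log(1/\delta))^2\right)$ with $\delta = \eta/\kappa$, yielding
\begin{equation*}
\kappa\,\mathcal{V}_\text{LB}(\rho/\kappa, N_X\snr) \;=\; \kappa\log(N_X\snr) + \eta\log(N_X\snr) - \kappa + \eta\log(\kappa/\eta) + O(1/\snr) + o(\eta).
\end{equation*}
Subtracting, the divergent $\log\snr$ contributions cancel and the remaining Taylor expansion in $\eta$ collapses to
\begin{equation*}
\mathcal{V}(\rho, V_X\snr) - \kappa\,\mathcal{V}_\text{LB}(\rho/\kappa, N_X\snr) \;=\; A + \eta\log\!\left(\frac{V_X(1-\kappa)\,\eta}{N_X\kappa}\right) + O(1/\snr) + o(\eta).
\end{equation*}
At the chosen $\eta = (2R(D;\kappa) - A - \epsilon)/\log\snr$ the middle term equals $-(2R(D;\kappa) - A - \epsilon)(\log\log\snr)/\log\snr + O(1/\log\snr)$, so the left-hand side tends to $A$ and is strictly below $2R(D;\kappa)$ for all sufficiently large $\snr$, verifying the non-achievability condition.

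The main technical obstacle is the non-uniformity of the expansion of $\mathcal{V}_\text{LB}(r, \gamma)$ at the boundary $r = 1$. Although \eqref{eq:V_LB_funct} is continuous there, the $r > 1$ branch contains the factor $(r/(r-1))^{r-1}$, whose logarithm contributes the non-analytic $\delta\log(1/\delta)$ term as $\delta = r - 1 \to 0^+$. This is precisely the origin of the $\eta\log(\kappa/\eta)$ in the subtracted expansion; replacing it by a cruder $O(\eta)$ estimate would alter the exact constant appearing in the corollary. The observation that $\eta\log(1/\eta) = O((\log\log\snr)/\log\snr)$ at the chosen rate, which is of smaller order than the strictly positive margin $2R(D;\kappa) - A$, is what lets the substitution $\eta \sim 1/\log\snr$ close the argument at exactly the stated constant.
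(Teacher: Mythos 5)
Your overall strategy is viable and is essentially the paper's argument run in the opposite direction: you fix the explicit rate $\rho=\kappa+\eta$ with $\eta=(2R(D;\kappa)-A-\epsilon)/\log\snr$ and verify the non-achievability condition \eqref{eq:LB_iid_1}, whereas the paper defines $\rho_\gamma$ implicitly through the equality \eqref{eq:fixpoint}, shows $\rho_\gamma\to\kappa$ via Corollary~\ref{thm:high_SNR_1}, and reads off $(\rho_\gamma-\kappa)\log\gamma$ from the same expansions (Lemma~\ref{lem:VLB_comp} plus the explicit branches of $\mathcal{V}_\text{LB}$). However, your core computation contains a genuine error. In \eqref{eq:V_LB_funct} the $r>1$ branch is $\log\bigl(1+\gamma\,r\,(r/(r-1))^{r-1}/e\bigr)$ with \emph{no outer factor of $r$}; hence, with $r=\rho/\kappa=1+\eta/\kappa$,
\begin{equation*}
\kappa\,\mathcal{V}_\text{LB}(\rho/\kappa,N_X\snr)=\kappa\log(N_X\snr)-\kappa+\eta+\eta\log(\kappa/\eta)+o(\eta)+O(1/\snr),
\end{equation*}
so the coefficient of $\log(N_X\snr)$ is $\kappa$, not $\kappa+\eta$: your term $\eta\log(N_X\snr)$ is spurious. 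Since $\mathcal{V}(\rho,V_X\snr)$ carries the coefficient $\rho=\kappa+\eta$ on $\log\snr$, the divergent contributions do \emph{not} cancel. Writing $A=\kappa\log(V_X/N_X)+(1-\kappa)\log\bigl(\tfrac{1}{1-\kappa}\bigr)$ as in your proposal, the correct expansion of the left-hand side of \eqref{eq:LB_iid_1} is
\begin{equation*}
A+\eta\log\snr+\eta\log\Bigl(\frac{V_X(1-\kappa)\,\eta}{e\,\kappa}\Bigr)+o(1),
\end{equation*}
which at your chosen $\eta$ tends to $2R(D;\kappa)-\epsilon$, not to $A$. A sanity check that the claimed cancellation cannot hold: if the left-hand side tended to $A$ for \emph{every} $\eta\to0$, you could let $\eta\to0$ arbitrarily slowly and conclude $(\rho^*-\kappa)\log\snr\to\infty$, contradicting the matching achievability bound \eqref{eq:high_SNR_UB}.

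The good news is that the error is local and your conclusion survives the correction: since $2R(D;\kappa)-\epsilon<2R(D;\kappa)$, condition \eqref{eq:LB_iid_1} still holds at $\rho=\kappa+\eta$ for all sufficiently large $\snr$ (and, because the left-hand side is nondecreasing in $\rho$, at every smaller rate as well, so Theorem~\ref{thm:LB_iid_1} indeed gives $\rho^*\ge\kappa+\eta$), and letting $\epsilon\to0^+$ yields the stated liminf. In other words, the residual $\eta\log\snr$ term is precisely what carries the constant $2R(D;\kappa)-A$ — it is the analogue of the paper's explicit $(\rho_\gamma-\kappa)\log\gamma$ term — and it is your calibration $\eta\asymp1/\log\snr$, not a cancellation, that closes the argument. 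Please redo the expansion of the $\mathcal{V}_\text{LB}$ term accordingly; the rest of your write-up (uniformity from Lemma~\ref{lem:VLB_comp} for $\rho$ bounded away from $1$, the $\delta\log(1/\delta)$ behavior at $r\downarrow1$, and the hypothesis \eqref{eq:DinfLim} guaranteeing $2R(D;\kappa)-A>0$) is used correctly.
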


Corollary~\ref{thm:high_SNR_2} shows that under the conditions of Corollary~\ref{thm:high_SNR_1b}, the lower bound on the sampling rate distortion function $\rho^*$ obeys
\begin{align}
\rho^* \ge \kappa + \frac{C}{\log(1+\snr)}
\end{align}
for some positive constant $C$. 

For comparison, it is shown in \cite{RG_IEEE_12} that the sampling rate-distortion function obeys the asymptotic upper bound:
\begin{align}
&\limsup_{\snrs \rw \infty} (\rho^* - \kappa) \log(\snr) \le  2 H(\kappa), \label{eq:high_SNR_UB}
\end{align}
and hence
\begin{align}
\rho^* \le \kappa + \frac{\tilde{C}}{\log(1+\snr)}
\end{align}
for some finite constant $\tilde{C}$. Together, these lower and upper bounds characterize precisely how the sampling rate distortion function $\rho^*$ converges to the sparsity rate $\kappa$ as the SNR increases.

\subsection{Extension to arbitrary distributions}\label{sec:arbitrary_dist}

Combining Theorem~\ref{thm:LB_iid_1} with Lemma~\ref{prop:genie} gives the following result which is the strongest bound in this paper.

\begin{theorem}\label{thm:LB_iid_2}
Under Assumptions SS1-SS3 and M1-M4, a distortion $D$ is not achievable for the tuple $(\rho,p_X,\snr)$ if there exists a tuple $(\tilde{D},\tilde{\rho},p_Z,\tilde{\snr})$ satisfying the assumptions of Lemma~\ref{prop:genie} such that
\begin{align}
\mathcal{V}(\rho,V_Z\, \tilde{\snr}) - \kappa_Z \mathcal{V}_\text{LB}(\rho/\kappa_Z,N_Z\, \tilde{\snr}) < 2 R(\tilde{D};\kappa_Z)  \label{eq:LB_iid_2}.
\end{align}
\end{theorem}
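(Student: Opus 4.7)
The proof would be a direct composition of Lemma~\ref{prop:genie} and Theorem~\ref{thm:LB_iid_1}; no new technical ingredients are required, and the work is essentially bookkeeping.

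My plan would proceed by a two-step reduction. First, I would appeal to Lemma~\ref{prop:genie}: for any admissible tuple $(\tilde D,\tilde\rho,p_Z,\tilde\snr)$ derived from $(D,\rho,p_X,\snr)$ as in the lemma, establishing that $\tilde D$ is not achievable for $(\tilde\rho,p_Z,\tilde\snr)$ will immediately yield that $D$ is not achievable for $(\rho,p_X,\snr)$. The task therefore reduces to verifying the hypothesis of Theorem~\ref{thm:LB_iid_1} on this reduced tuple.

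Second, I would apply Theorem~\ref{thm:LB_iid_1} to $(\tilde\rho,p_Z,\tilde\snr)$ at distortion $\tilde D$. By definition, $p_Z$ has sparsity rate $\kappa_Z$, variance $V_Z$, and nonzero entropy power $N_Z$, so the sufficient condition for non-achievability supplied by that theorem is exactly the inequality~\eqref{eq:LB_iid_2} (modulo the cosmetic re-labeling of the first argument of $\mathcal{V}$ and $\mathcal{V}_\text{LB}$). Chaining the two implications completes the argument.

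The only substantive thing to check---and hence the principal, albeit mild, obstacle---is that the reduced problem genuinely satisfies Assumptions SS1-SS3 and M1-M4, so that Theorem~\ref{thm:LB_iid_1} actually applies. In particular, one must verify that Lemma~\ref{prop:genie}'s reduction, which acts on the signal side by revealing side information about a subset of the nonzero entries of $\bX$, does not disturb the IID entry structure of the measurement matrix demanded by Assumption M4. Inspection of the genie construction in the proof of Lemma~\ref{prop:genie} should confirm that the measurement matrix distribution is left intact, at which point Theorem~\ref{thm:LB_iid_1} applies to the reduced tuple without modification and the theorem follows.
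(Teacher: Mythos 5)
Your proposal is exactly the paper's argument: Theorem~\ref{thm:LB_iid_2} is given there with no proof beyond the remark that it follows by combining Lemma~\ref{prop:genie} with Theorem~\ref{thm:LB_iid_1}, and the checks you single out (that the genie-reduced problem satisfies SS1--SS2 with $p_Z$ and, after rescaling the surviving i.i.d.\ submatrix, M1--M4 with parameters $\tilde{\rho}$ and $\tilde{\snr}$) are precisely what the lemma's proof supplies. Note only that the literal composition yields the condition with $\tilde{\rho}$ rather than $\rho$ in the first arguments of $\mathcal{V}$ and $\mathcal{V}_\text{LB}$, so dismissing that substitution as ``cosmetic'' deserves a word of justification, though this discrepancy lies in the printed statement rather than in your reasoning.
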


Theorem~\ref{thm:LB_iid_2} has the same low-distortion behavior as Theorem~\ref{thm:LB_gen2}. Furthermore it allows us to extend the high-SNR improvements of Theorem~\ref{thm:LB_iid_1} to arbitrary distributions. 

For example, consider the following result.

\begin{cor}\label{cor:LB_iid_2}
Suppose that $p_X$ can be expressed as
\begin{align}
p_X = (1-\kappa) \,\delta_0 + \omega_c\, p_{X_c} + (\kappa - \omega_c) \, p_{\tilde{X}}
\end{align}
where $X_c$ is continuous with finite differential entropy. Under Assumptions SS1-SS2 and M1-M4, a distortion $D$ is not achievable for the tuple $(\rho, p_X,\snr)$ in the noiseless setting if $\rho < \omega_c$ and \eqref{eq:LB_iid_2} holds for the tuple $(\tilde{D}, \tilde{\rho}, p_Z,\tilde{\snr})$ given by
\begin{align}
\tilde{D} &= \frac{\kappa}{\omega_c} D\\
\tilde{\rho} &= \Big( \frac{1}{1-\kappa + \omega_c}\Big) \rho\\
p_Z &= \Big(\frac{1-\kappa}{1-\kappa + \omega_c}\Big) \delta_0 + \Big(\frac{\omega_c}{1-\kappa + \omega_c}\Big) p_{X_c}  \label{eq:p_Z_cor} \\
\tilde{\snr} &= \Big( \frac{1}{1-\kappa + \omega_c}\Big) \snr.
\end{align}
\end{cor}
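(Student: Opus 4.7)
The plan is to obtain Corollary~\ref{cor:LB_iid_2} as a direct specialization of Theorem~\ref{thm:LB_iid_2}: the only real content is that the distribution $p_Z$ in \eqref{eq:p_Z_cor} is an admissible choice in Lemma~\ref{prop:genie}, so that the condition \eqref{eq:LB_iid_2} can indeed be invoked with the induced parameters listed in the corollary. No new analytic machinery is needed; the work reduces to checking the hypotheses of the genie lemma and tracking the scaling factors through its formulas.

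First I would verify the two structural hypotheses of Lemma~\ref{prop:genie}. Reading off the point mass at zero from \eqref{eq:p_Z_cor} gives $\kappa_Z = 1 - p_Z(\{0\}) = \omega_c/(1-\kappa+\omega_c)$. The inequality $\kappa_Z \le \kappa_X = \kappa$ is then equivalent, after clearing the positive denominator, to $\omega_c(1-\kappa) \le \kappa(1-\kappa)$, i.e.\ to $\omega_c \le \kappa$, which is automatic from the decomposition of $p_X$ since $\kappa - \omega_c$ is the mass of the nonnegative measure $(\kappa-\omega_c)p_{\tilde X}$. For the domination condition \eqref{eq:pZ_bound}, I would restrict to $A \subseteq \mathbb{R}\setminus\{0\}$ and compute directly from \eqref{eq:p_Z_cor} that $p_Z(A)/(1-\kappa_Z) = \omega_c \, p_{X_c}(A)/(1-\kappa)$, while from the decomposition of $p_X$ one has $p_X(A)/(1-\kappa_X) = [\omega_c p_{X_c}(A) + (\kappa-\omega_c) p_{\tilde X}(A)]/(1-\kappa)$. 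The inequality then follows immediately from the nonnegativity of $(\kappa-\omega_c)p_{\tilde X}(A)$.

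Next I would compute the induced parameters. Since $(1-\kappa_Z)/(1-\kappa_X) = 1/(1-\kappa+\omega_c)$, the formulas of Lemma~\ref{prop:genie} give
\begin{align*}
\tilde\rho &= \frac{\rho}{1-\kappa+\omega_c}, \qquad \tilde\snr = \frac{\snr}{1-\kappa+\omega_c},\\
\tilde D &= \frac{1}{1-\kappa+\omega_c}\cdot \frac{\kappa(1-\kappa+\omega_c)}{\omega_c}\, D = \frac{\kappa}{\omega_c}\, D,
\end{align*}
which exactly matches the parameters asserted by the corollary. The side condition $\rho < \omega_c$ is equivalent to $\tilde\rho < \kappa_Z$, which is precisely the regime in which $\mathcal{V}_{\text{LB}}(\tilde\rho/\kappa_Z,\cdot)$ uses its $r<1$ branch in \eqref{eq:V_LB_funct} and thereby yields a non-trivial bound at infinite SNR (cf.\ Corollary~\ref{thm:high_SNR_1}); this is what makes the noiseless version meaningful. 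With these identifications, Theorem~\ref{thm:LB_iid_2} applied to $p_X$ via the $p_Z$ of \eqref{eq:p_Z_cor} yields non-achievability exactly under the stated condition.

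The main obstacle is nothing more than algebraic bookkeeping: keeping track of the common factor $1/(1-\kappa+\omega_c)$ across $\tilde D$, $\tilde\rho$, $\tilde\snr$, and ensuring that $p_Z$ is genuinely a valid reduction -- that is, that discarding the possibly pathological component $(\kappa-\omega_c)p_{\tilde X}$ (which need not admit a differential entropy and could force $N_X = 0$) preserves the genie-lemma inequalities. Once $p_Z$ is chosen so that its nonzero part coincides (up to normalization) with $p_{X_c}$, the continuity hypothesis ensures $h(Z \mid Z \ne 0)$ is finite, so $N_Z > 0$ and Theorem~\ref{thm:LB_iid_1} applies non-vacuously to $p_Z$; the genie reduction of Lemma~\ref{prop:genie} then transfers the conclusion back to $p_X$.
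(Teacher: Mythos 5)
Your proposal is correct and takes essentially the same route as the paper: Corollary~\ref{cor:LB_iid_2} is stated there as a direct specialization of Theorem~\ref{thm:LB_iid_2}, and your verification of conditions \eqref{eq:kZ_bound}--\eqref{eq:pZ_bound} for the $p_Z$ of \eqref{eq:p_Z_cor}, the computation of $(\tilde{D},\tilde{\rho},\tilde{\snr})$ via $(1-\kappa_Z)/(1-\kappa_X)=1/(1-\kappa+\omega_c)$, and the remark that finiteness of $h(X_c)$ gives $N_Z>0$ (with $\rho<\omega_c$ equivalent to $\tilde{\rho}<\kappa_Z$) is precisely the intended bookkeeping.
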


Starting with Corollary~\ref{cor:LB_iid_2} and following the same steps that let to Corollary~\ref{thm:high_SNR_2} gives the following high-SNR characterization.

\begin{cor}\label{cor:high_SNR} Consider the assumptions of Corollary~\ref{cor:LB_iid_2} and let
\begin{align}
\Delta &= 2(1-\kappa + \omega_c) R\Big(\frac{\kappa}{\omega_c} D;\frac{\omega_c}{1\!-\!\kappa + \omega_c}\Big) \nonumber \\
&\quad -\omega_c \log \Big( \frac{V_Z}{N_Z}\Big) +  (1-\kappa)  \log \Big( \frac{1-\kappa + \omega_c}{1-\kappa}\Big)
 \end{align}
where $p_Z$ is given by \eqref{eq:p_Z_cor}.
If $\Delta > 0$, then the sampling rate-distortion function obeys
\begin{align}
\liminf_{\snr \rw \infty} (\rho^*- \omega_c)\log(\snr) \ge \Delta.
\end{align}
\end{cor}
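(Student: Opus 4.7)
The plan is to combine the genie-reduction of Corollary~\ref{cor:LB_iid_2} with the high-SNR asymptotic of Corollary~\ref{thm:high_SNR_2} applied to the transformed problem. Set $\alpha = 1 - \kappa + \omega_c$. Since $p_Z$ in \eqref{eq:p_Z_cor} has a continuous nonzero part with finite differential entropy, $N_Z > 0$ and the hypotheses of Theorem~\ref{thm:LB_iid_1} (and hence Corollary~\ref{thm:high_SNR_2}) hold for the tuple $(\tilde D, p_Z, \tilde\snr)$. The key book-keeping identities are $\alpha\kappa_Z = \omega_c$, $\alpha(1-\kappa_Z) = 1 - \kappa$, and $\tilde\snr = \snr/\alpha$.

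First I would use Corollary~\ref{cor:LB_iid_2} to translate achievability: taking contrapositives at the level of infima over achievable rates shows that $\rho^*(D, p_X, \snr) \ge \alpha\,\rho^*(\tilde D, p_Z, \tilde\snr)$. Subtracting $\omega_c = \alpha\kappa_Z$ from both sides, multiplying by $\log(\snr)$, and noting that $\log(\snr)/\log(\tilde\snr) \to 1$ as $\snr \rw \infty$ reduces the claim to a statement about the liminf of $(\tilde\rho^* - \kappa_Z)\log(\tilde\snr)$ for the tilde-problem, scaled by $\alpha$.

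Next I would apply Corollary~\ref{thm:high_SNR_2} to the tuple $(\tilde D, p_Z, \tilde\snr)$. This first requires checking the hypothesis \eqref{eq:DinfLim} for the tilde-problem, namely $\kappa_Z\log(V_Z/N_Z) + (1-\kappa_Z)\log(1/(1-\kappa_Z)) < 2R(\tilde D;\kappa_Z)$; multiplying through by $\alpha$ and using the identities above shows this is equivalent to the positivity of $\Delta$ (up to the sign convention of the last logarithmic term). The conclusion of Corollary~\ref{thm:high_SNR_2} then yields a lower bound on the desired liminf; multiplying this bound by $\alpha$ and simplifying via $\alpha\kappa_Z\log(V_Z/N_Z) = \omega_c\log(V_Z/N_Z)$ and $\alpha(1-\kappa_Z)\log(1/(1-\kappa_Z)) = (1-\kappa)\log(\alpha/(1-\kappa))$ recovers $\Delta$ on the right-hand side, completing the argument.

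The only real obstacle is bookkeeping: keeping the scalar $\alpha$ consistent across the three quantities it rescales (the rate, the SNR, and the sparsity parameter entering $R$), and verifying cleanly that $\log(\snr)/\log(\tilde\snr) \to 1$ so the two liminfs line up. No new information-theoretic content is needed beyond Corollaries~\ref{cor:LB_iid_2} and~\ref{thm:high_SNR_2}; this result is essentially the image of Corollary~\ref{thm:high_SNR_2} under the genie transformation.
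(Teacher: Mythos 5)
Your proposal is correct and is essentially the paper's own argument: the paper proves this corollary precisely by "following the same steps that led to Corollary~\ref{thm:high_SNR_2}" starting from Corollary~\ref{cor:LB_iid_2}, i.e.\ the high-SNR fixed-point expansion of $\mathcal{V}$ and $\mathcal{V}_\text{LB}$ applied to the genie-transformed tuple $(\tilde D,\tilde\rho,p_Z,\tilde\snr)$ and then rescaled by $\alpha=1-\kappa+\omega_c$, which is exactly your reduction (and the constant shift $\log\snr-\log\tilde\snr=\log\alpha$ indeed washes out since $\tilde\rho^*-\kappa_Z=O(1/\log\tilde\snr)$). On the one point you hedged: your bookkeeping gives the last term as $-(1-\kappa)\log\bigl(\tfrac{1-\kappa+\omega_c}{1-\kappa}\bigr)$, and that sign is the correct one --- the printed $\Delta$ appears to have the argument of that logarithm inverted, as one can check by setting $\omega_c=\kappa$, where the corollary must reduce to Corollary~\ref{thm:high_SNR_2} and does so only with your sign convention.
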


Corollary~\ref{cor:high_SNR} shows that if the nonzero entropy power $N_{Z}$ is large relative to the desired distortion $D$, then the sampling rate-distortion function obeys
\begin{align}
\rho^* \ge \omega_c + \frac{C}{\log(1+\snr)}
\end{align}
for some positive constant $C$. This result shows that the high-SNR behavior is dominated by the weight of the continuous part of the distribution on the nonzero entries.

\begin{figure*}[htbp]
\centering
\psfrag{y1}[B]{\small Distortion $\alpha$}
\psfrag{x1}[]{\small Sampling Rate $\rho$} 
\psfrag{t1}[B]{}
\psfrag{Upper Bound [?]}[lb]{\small Upper Bound \cite{RG_IEEE_12}}
\psfrag{Lower Bound}[lb]{\small Lower Bound}
\psfrag{(Theorem 1)}[lT]{\small (Proposition \ref{thm:nec1})}
\psfrag{(Theorem 2)}[lT]{\small (Propositions \ref{thm:nec-G} \&  \ref{thm:nec2})}
\epsfig{file=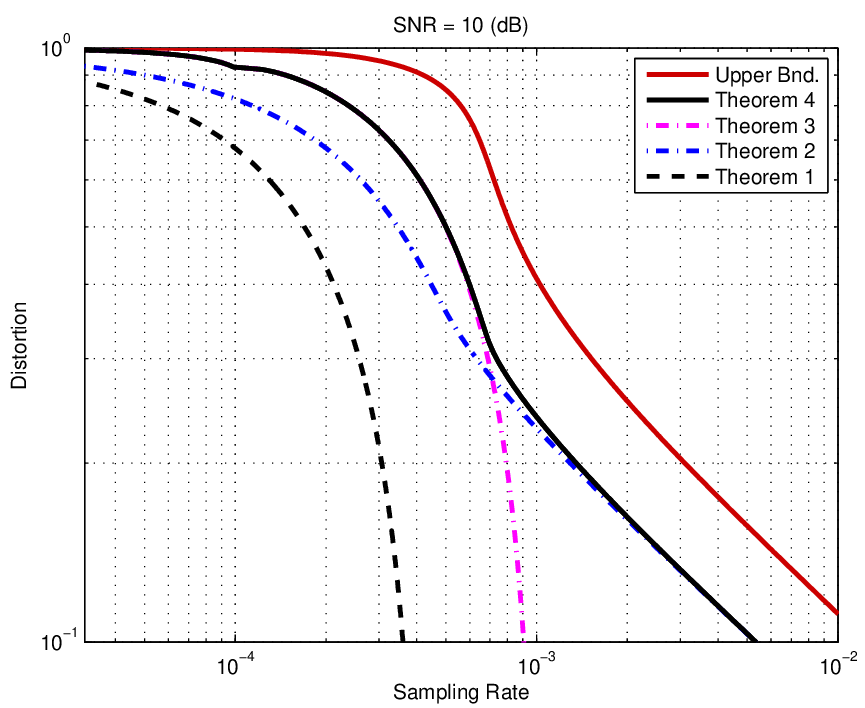, width = 0.45\textwidth}
\epsfig{file=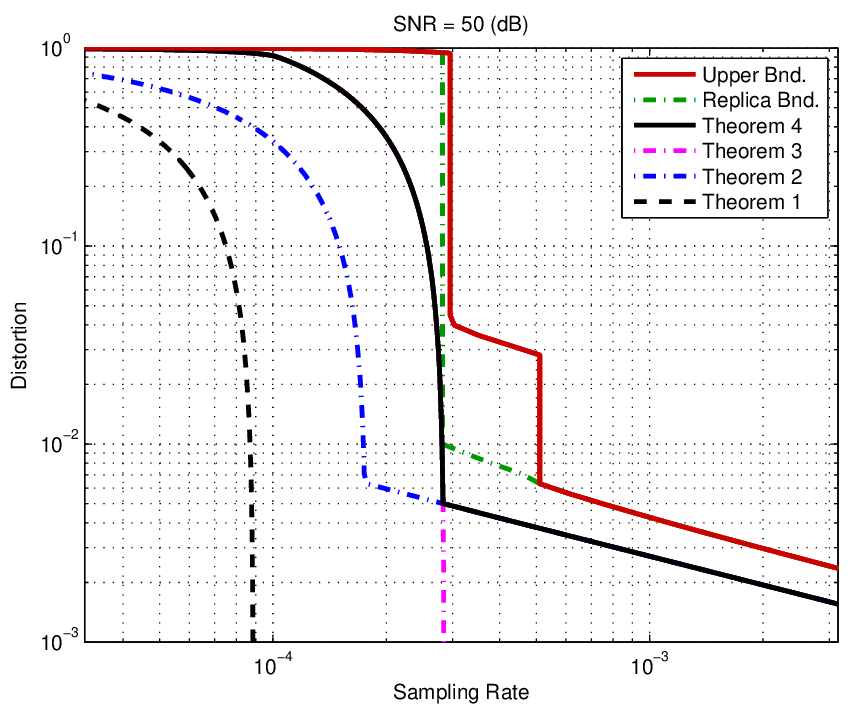, width =0.45\textwidth}
\psfrag{y1}[B]{\small Distortion $\alpha$}
\psfrag{x1}[]{\small Sampling Rate $\rho$} 
\psfrag{t1}[B]{}
\psfrag{Upper Bound [?]}[lb]{\small Upper Bound \cite{RG_IEEE_12}}
\psfrag{Lower Bound}[lb]{\small Lower Bound}
\psfrag{(Theorem 1)}[lT]{\small (Proposition \ref{thm:nec1})}
\psfrag{(Theorem 2)}[lT]{\small (Propositions \ref{thm:nec-G} \&  \ref{thm:nec2})}
\epsfig{file=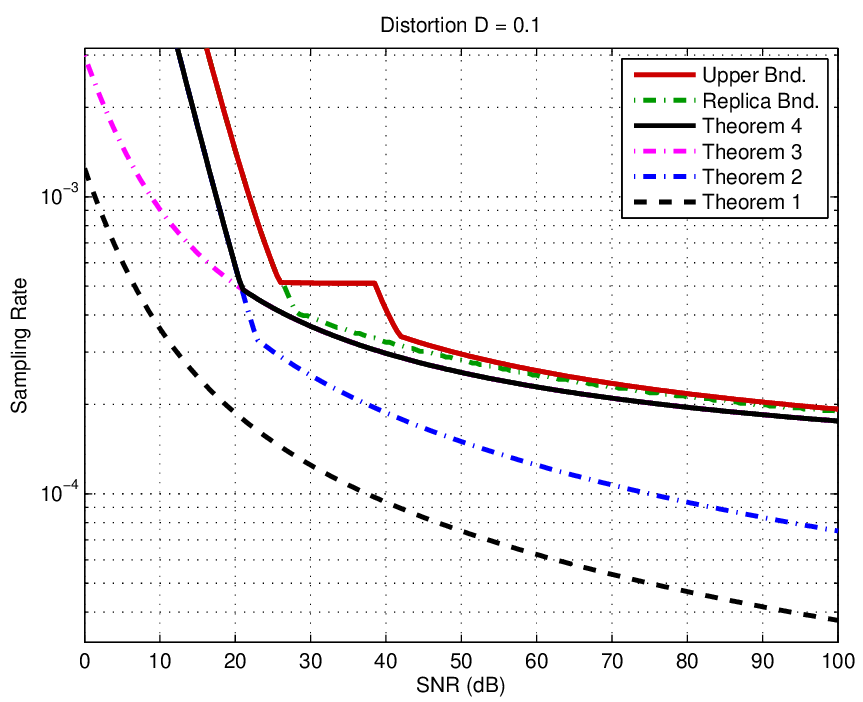, width = 0.45\textwidth}
\epsfig{file=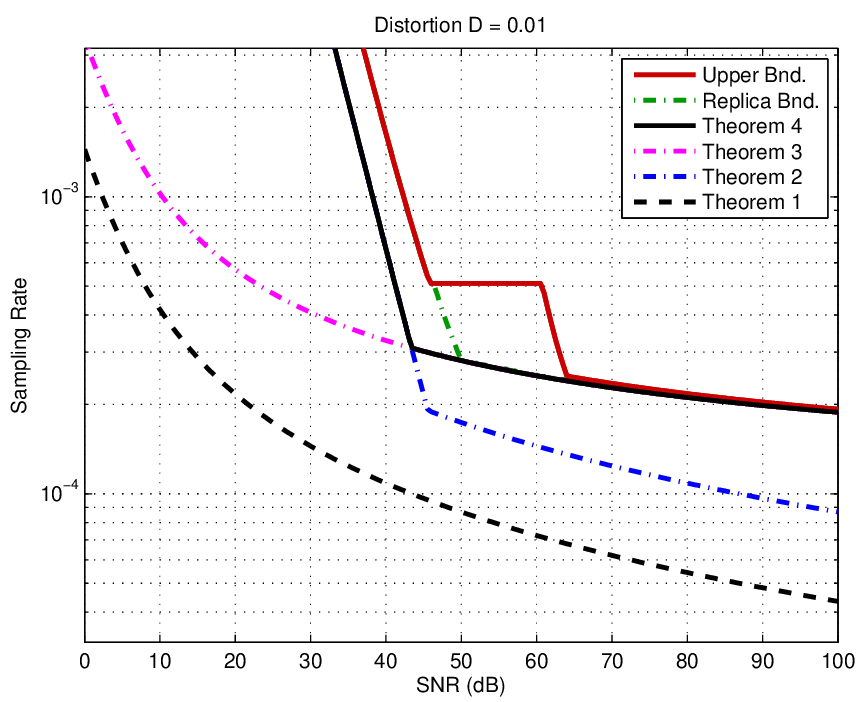, width =0.45\textwidth}
\caption{Comparison of the lower bounds in Theorems 1--4 for a Bernoulli-Gaussian distribution with sparsity rate $\kappa = 10^{-4}$. The top row plots the lower bounds on the distortion $D$ as a function of the sampling rate $\rho$ for fixed SNR. The bottom row plots the lower bounds on the sampling rate $\rho$ as a function the SNR for fixed distortions. Also shown is an upper bound derived in \cite{RG_IEEE_12} and a heuristic bound derived using the standard but nonrigorous replica method from statistical physics (see \cite{RG_IEEE_12} for more details). We remark that the ``kinks'' in the upper bound are likely an artifact of the bounded technique used in \cite{RG_IEEE_12}.}\label{fig:comparison}
\end{figure*}

\section{Examples and Illustrations}\label{sec:analysis} 

This section provides specific examples and illustrations of the bounds developed in this paper.

\subsection{Comparison of Lower Bounds}\label{sec:comparison}

We begin with a comparison of the lower bounds in Theorems~1--4. To illustrate these bounds, we consider the special case of the Bernoulli-Gaussian distribution given by
\begin{align}
X = 
\begin{cases}
0, & \text{with probability $1-\kappa$}\\
W, & \text{ with probability $\kappa$}
\end{cases}
\end{align}
where $W$ is a Gaussian random variable with mean zero and variance $1/\kappa$. This distribution has polynomial decay rate $L=1$ and limiting constant $\tau = \sqrt{2 \kappa /\pi}$. Moreover, its nonzero entropy power $N_X$ is equal to the variance $V_X$. 

The bounds in Theorems 1--4 corresponding to the Bernoulli-Gaussian are shown in Figure~\ref{fig:comparison}. In all cases, the initial lower bound given in Theorem~1 is highly sub-optimal and does not reflect the true difficultly of the recovery problem. By contrast, the strongest bound in this paper, Theorem~4, is in close agreement with the behavior of the upper bound from \cite{RG_IEEE_12}.

The relative merits of Theorems 2 and 3 depend on the problem regime. When the sampling rate is large relative to the SNR and the distortion, the difficulty of recovery is dominated by the magnitude of the smallest nonzero entries and Theorem~2 provides a stronger bound. Conversely, when the sampling rate is small relative to the SNR and distortion, the difficulty of recovery is dominated by the entropy of the nonzero entries and Theorem~3 provides a stronger bound.

The bounds on the achievable distortion plotted in top left panel of Figure~\ref{fig:comparison} show that Theorem~4 can be strictly greater than the maximum of the Theorems 2~and~3.

\begin{figure*}[htbp]
\centering
\epsfig{file=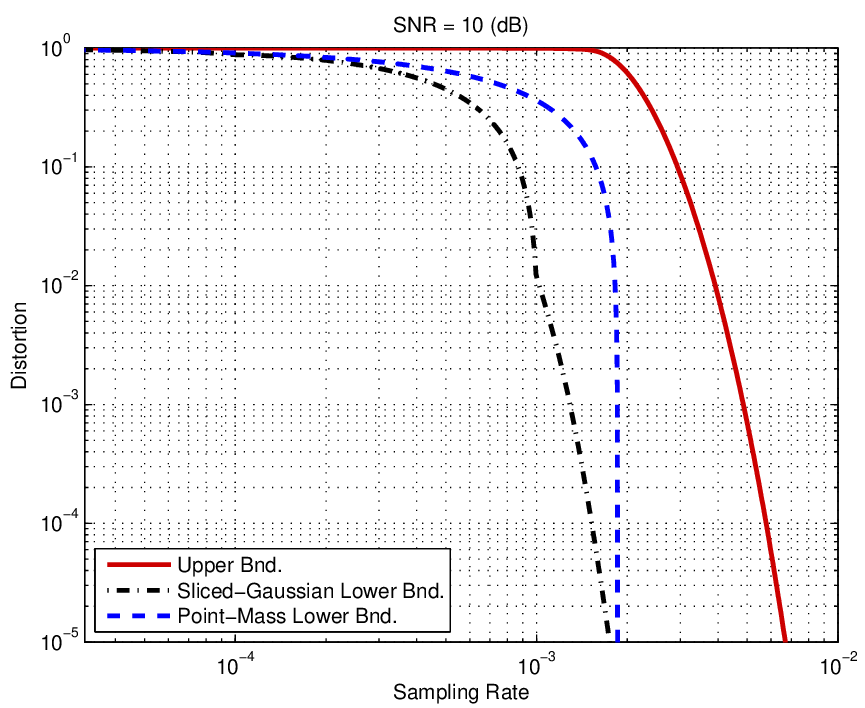, width = 0.45\textwidth}
\epsfig{file=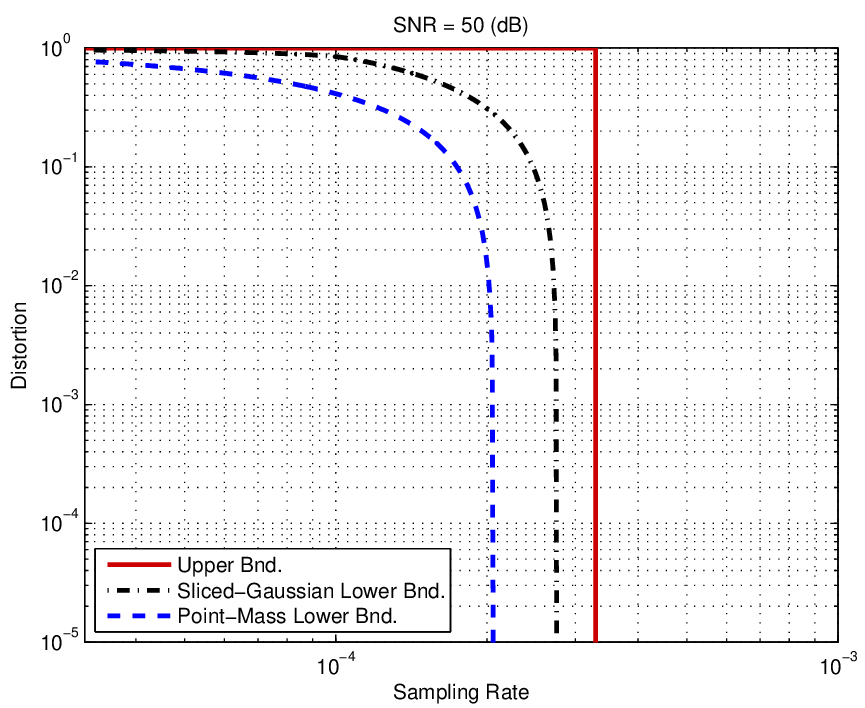, width =0.45\textwidth}
\epsfig{file=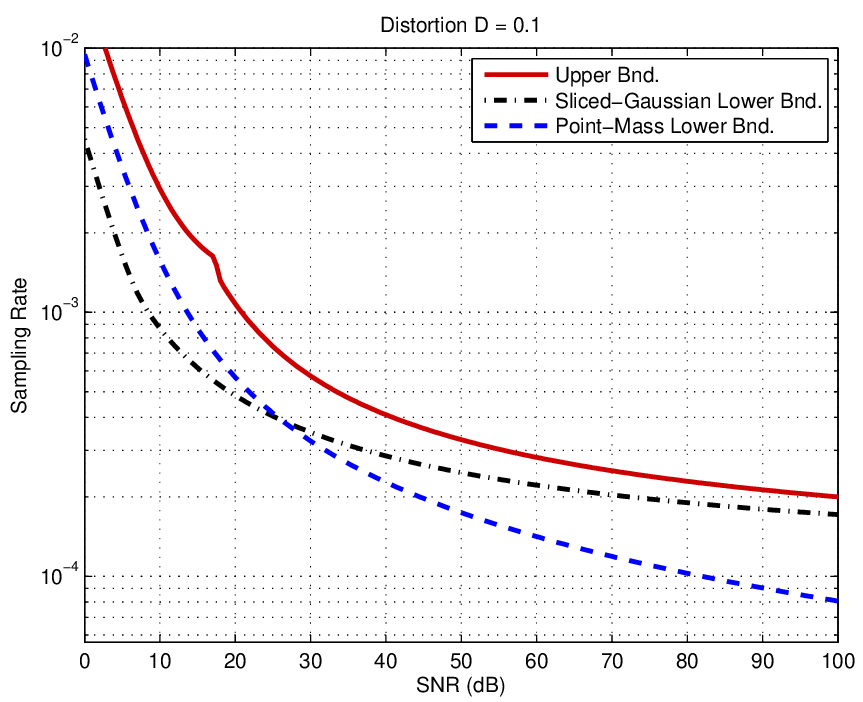, width = 0.45\textwidth}
\epsfig{file=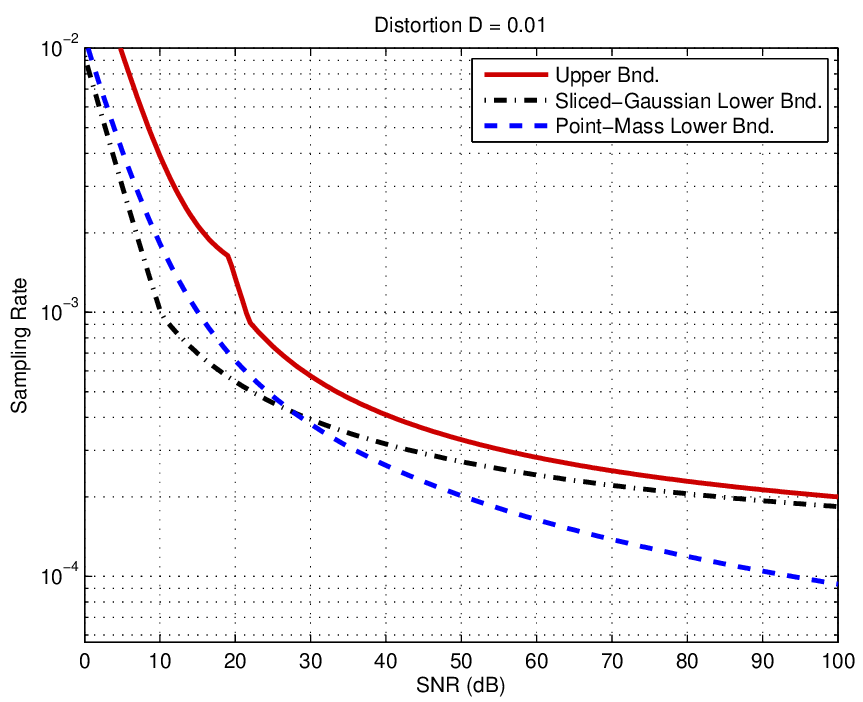, width =0.45\textwidth}
\caption{
Comparison of the lower bound in Theorem 4 for the sliced-Gaussion and point-mass distributions. In both cases, the distributions have second moment equal to one, sparsity rate $\kappa = 10^{-4}$, and lower bound $B = \sqrt{0.2/\kappa}$ (i.e. the nonzero entries of $\bX$ are lower bounded in squared magnitude by 20\% of their average power). The top row plots the lower bounds on the distortion $D$ as a function of the sampling rate $\rho$ for fixed SNR. The bottom row plots the lower bounds on the sampling rate $\rho$ as a function the SNR for fixed distortions. Also shown is a minimax upper bound derived in \cite{RG_IEEE_12} which applies universally over the class of bounded signals $\mathcal{P}_\text{Bounded}(\kappa,B)$.}\label{fig:minimax}
\end{figure*}

\subsection{Lower Bounds for Signal Classes}\label{sec:minimax} 
Throughout this paper, we have assumed that the underlying distribution $p_X$ is known. More realistically though, it may be the case that the distribution $p_X$ is known to belong to class $\mathcal{P}$ of sparse distributions, but is otherwise unknown. In these cases, a distortion $D$ is said to be achievable for a class $\mathcal{P}$ if and only if there exists a fixed estimator $\hat{S}(\bY,\bA)$ such that
\begin{align}
\sup_{p_X \in \mathcal{P}} \Pr[ d(S^*,\hat{S}(\bY,\bA)) > D] \rw 0 \quad \text{as $n \rw \infty$.}
\end{align}

One class of distributions considered widely throughout the literature is the bounded signal class $\mathcal{P}_\text{Bounded}(\kappa,B)$, i.e.~the class of all distributions with sparsity rate $\kappa$, second moment equal to one, and $\Pr[|X| < B| X \ne 0] =0$ for some lower bound $B>0$. In \cite{RG_IEEE_12}, upper bounds on the sampling rate-distoriton function of this class are derived for several different recovery algorithms. In this section, we provide corresponding information-theoretic lower bounds. 

To proceed, we use the simple fact that a distortion is not achievable for a class of distributions if it is not achievable for each distribution $p_X$ in that class. In the following, we evaluate Theorem~4 for two carefully chosen distributions in the class $\mathcal{P}_\text{Bounded}(\kappa,B)$.

\begin{itemize}
\item 
{\em Point-Mass Lower Bound: } For the first bound, we consider the distribution $p_X$ given by
\begin{align*}
\Pr[X=x] = 
\begin{cases}
1-\kappa & \text{if $x = 0$}\\
\kappa\, (1-\epsilon) & \text{if $x = B$}\\
\kappa \, \epsilon & \text{if $x =  \frac{1/\kappa - (1-\epsilon)B^2}{\epsilon}$}
\end{cases}
\end{align*}
for some some $\epsilon \in (0,1)$. If $\epsilon$ is small, then this distribution places almost all of its nonzero mass at the lower bound $B$, and so 
$P(D;p_X) \approx D B^2$ 
for small distortions. Since the distribution is discrete, the nonzero entropy power $N_X$ is equal to zero.

\item
{\em Sliced-Gaussian Lower Bound: } For the second bound, we consider the distribution $p_X$ given by 
\begin{align*}
X = 
\begin{cases}
0, & \text{with probability $1-\kappa$}\\
\text{sgn}(W) B + W, & \text{ with probability $\kappa$}
\end{cases}.
\end{align*}
where $W$ is a Gaussian random variable with mean zero and variance $\sigma_W^2$ scaled so that $\bE[X^2] =1$. For this distribution, the function $P(D;p_X)$ is larger than for the point-mass distribution. However, the nonzero entropy power is $N_X = \kappa\,  \sigma_W^2$.
\end{itemize}

The bounds in Theorem 4 corresponding to the point-mass and sliced-Gaussian distributions are plotted in Figure~\ref{fig:minimax} along with the universal upper bound from \cite{RG_IEEE_12}. We emphasize that the maximum of the two lower bounds is also a valid lower bound for the bounded signal class. 

The relative strengths of two lower bounds depend on the problem regime. When the SNR is large relative to the sampling rate, the sliced-Gaussion distribution provides a stronger bound. Conversely, when the SNR is small relative to the sampling rate, the point-mass distribution provides a stronger bound.

\section{Discussion}

In this section, we review the main contributions of the paper and discuss various implications of our analysis. 

\subsection{Overview of results}
The information-theoretic lower bounds derived in this paper, in conjunction with achievable bounds in \cite{RG_IEEE_12}, characterize the fundamental limit of what cannot be recovered in presence of noise. The results in this paper can be summarized as follows:
\begin{itemize}
\item Theorem~1 gives an initial lower bound based on Fano's inequality. This result, which is closely related to existing bounds in the literature, serves as a building block for the main results.   
\item Theorem~2 gives a significantly improved lower bound based on the genie result given in Lemma 1. In conjunction with the upper bounds in \cite{RG_IEEE_12}, Theorem~2 gives a tight characterization of the low-distortion behavior of the sampling rate-distortion function. 

\item Theorem~3 gives a different lower bound based on the entropy power inequality and the asymptotic spectral convergence of i.i.d.~random matrices. In conjunction with the upper bounds in \cite{RG_IEEE_12}, Theorem~3 gives a tight characterization of the high-SNR behavior of the sampling rate-distortion function for settings where the nonzero entries are continuously distributed.

\item Theorem~4 combines Theorem 3 with the genie result in Lemma~1 to give the strongest lower bound in the paper. This bound combines the low-distortion improvements of Theorem 2 and the high-SNR improvements of Theorem~3.

\end{itemize}

\subsection{Fundamental Behavior of Sparsity Pattern Recovery}

Our bounds show that the tradeoffs between the sampling rate $\rho$, the distortion $D$, and the SNR can be characterized in terms of certain key properties of the underlying distribution $p_X$. The following limiting behaviors are considered.

\subsubsection*{ High-SNR Behavior}
Let the desired distortion $D$ be fixed. As the SNR becomes large, the difficulty of recovery is dominated by the entropy of the nonzero entries. If the nonzero part of $p_X$ has a continuous component with weight $\omega_c$ and a relatively large differential entropy, then the sampling rate-distortion function obeys 
$$\rho^* \ge \omega_c + \frac{C}{\log(\snr)}.$$
This behavior can be seen in the top row of Figure~1. 

\subsubsection*{Low-Distortion Behavior}
Let the SNR be fixed. As the desired distortion becomes small, the difficulty of recovery is dominated by the relative magnitudes of the smallest nonzero entries. If the nonzero entries are bounded away from zero, then the sampling-rate distortion function obeys
\[ \rho^* \ge C \cdot \log(1/D).\]
If the nonzero entries are drawn from a distribution with decay rate $L$, then the sampling rate-distortion function obeys
\[\rho^* \ge C \cdot  D^{2/L} \cdot \log(1/D).\]
This behavior can be seen in the bottom rows of Figures~1 and 2.

\subsection{Role of Model Assumptions}

This paper focuses on the setting where a constant fraction of the entries are nonzero (Assumption SS1). In principle, many of the tools developed in the paper could also be used to address settings where the number of nonzero entries grows sub-linearly with the vector length, and hence there is a vanishing fraction of nonzero entries.

Our use of row normalization (Assumption M3) differs from many related works which use column normalization. The reason for our scaling is that, from a sampling perspective, one way to decrease the effect of noise is to take additional samples (all at a fixed per-measurement SNR). If the column norms of the measurement matrix are constrained, then this is not possible since the per-measurement SNR will necessarily decrease as the number of measurements increases. Since it is assumed throughout that the sampling rate $\rho$ is a fixed constant, all results in this paper can be compared to existing works under an appropriate rescaling of the SNR.

The proofs of Theorems 3 and 4 rely heavily on the assumption that the measurement matrices have i.i.d.~entries (Assumption M4). In \cite{RG_IEEE_12}, it is shown that certain rate-sharing matrices (which are not i.i.d.) can achieve distortions that are lower than the bounds given in Theorems 3 and 4. Therefore, a further contribution of this paper is that i.i.d.~matrices are strictly suboptimal in some problem settings.

\appendices


\section{Proof of Theorem~\ref{thm:LB_gen1}}\label{sec:thm:LB_gen1_proof}

The cornerstone of this proof is Fano's inequality which gives a lower bound on the error probability for any possible recovery algorithm in terms of the mutual information between $S^*$ and the pair $(\bY, \bA)$. We assume that the tuple $(D,p_X,\snr)$ is known throughout the system.

\begin{lemma}[Fano's Inequality]
 Let $S^*$ be distributed uniformly over all subsets of $[n]$ of size $k<n/2$. If $S^* \rw (\bY,\bA) \rw \hat{S}$ 
forms a Markov chain then
\begin{align}
\Pr[d(S^*,\hat{S}) > D] \ge 1 - \frac{I(S^*;\bY,\bA) + \log(2)}{\log {n \choose k} - \log \Big(\sum_{\ell = 0}^{\lceil D k \rceil }{k \choose \ell} {n-k \choose \ell} \Big)} \label{eq:fano}
\end{align}
for all $0 \le D \le 1$.
\end{lemma}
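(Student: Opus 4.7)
My plan is to invoke the list-decoding form of Fano's inequality. Define the ``confusion list''
\[
L(\hat{S}) = \bigl\{ S \subseteq [n] : |S| = k,\; d(S,\hat{S}) \le D \bigr\},
\]
so that the error event $\{d(S^*,\hat{S}) > D\}$ coincides with $\{S^* \notin L(\hat{S})\}$. Let $E = \one\{S^* \notin L(\hat{S})\}$ and $P_e = \Pr[E=1]$. The usual decomposition of the conditional entropy then gives
\begin{align*}
H(S^* \mid \hat{S}) &\le H(E) + \Pr(E=0)\,\log M + \Pr(E=1)\,\log\tbinom{n}{k} \\
&\le \log 2 + (1-P_e)\,\log M + P_e \log\tbinom{n}{k},
\end{align*}
where $M$ is a uniform upper bound on $|L(\hat{S})|$.

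Next, I would combine this with the data-processing inequality applied to the chain $S^* \to (\bY,\bA) \to \hat{S}$ and the fact that $S^*$ is uniform on its $\binom{n}{k}$ possible values (so that $H(S^*) = \log\binom{n}{k}$) to obtain
\[
H(S^*\mid \hat{S}) \ge H(S^* \mid \bY,\bA) = \log\tbinom{n}{k} - I(S^*;\bY,\bA).
\]
Equating the two bounds on $H(S^*\mid\hat{S})$ and solving for $P_e$ yields exactly the stated inequality, with the denominator being $\log\binom{n}{k} - \log M$. The hypothesis $k < n/2$ guarantees this denominator is strictly positive for all $D<1$, so the bound is nontrivial.

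The only step that requires genuine work is the list-size bound
\[
|L(\hat{S})| \le \sum_{\ell=0}^{\lceil Dk\rceil} \binom{k}{\ell}\binom{n-k}{\ell} \quad \text{for every } \hat{S}\subseteq[n].
\]
When $|\hat{S}| = k$ this is immediate: parameterize each $S \in L(\hat{S})$ by $\ell = |S\setminus\hat{S}|$; then $\mathrm{MDR} = \mathrm{FAR} = \ell/k$, so $d(S,\hat{S}) \le D$ forces $\ell \le \lceil Dk\rceil$, and there are exactly $\binom{k}{\ell}\binom{n-k}{\ell}$ such $S$. For $\hat{S}$ of a different size $k'$, I would do a short case analysis on $j=|S\cap \hat{S}|$: the constraint $d(S,\hat{S})\le D$ forces $j \ge \max(k,k')(1-D)$, and the resulting count $\sum_j \binom{k'}{j}\binom{n-k'}{k-j}$ is dominated by the matched-cardinality expression above. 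This cardinality-mismatch step is the main obstacle, since the bound in the statement is naturally tailored to the case $|\hat{S}|=k$; once it is handled, assembling the pieces in the order above yields the inequality.
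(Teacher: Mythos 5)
Your overall architecture is essentially the paper's: the paper likewise introduces the error indicator $E$, bounds $H(S^*\mid E=1,\hat S)$ by $\log\binom{n}{k}$ and $H(S^*\mid E=0,\hat S)$ by the logarithm of the same list count, and uses the Markov chain to replace $H(S^*\mid\bY,\bA,\hat S)$ by $H(S^*\mid\bY,\bA)=\log\binom{n}{k}-I(S^*;\bY,\bA)$; your ``list-decoding Fano plus data processing'' phrasing is the same computation. The one place you genuinely diverge is the cardinality-mismatch step, and that is where your argument as written fails. The domination you invoke,
\[
\sum_{j\ge\max(k,k')(1-D)}\binom{k'}{j}\binom{n-k'}{k-j}\;\le\;\sum_{\ell=0}^{\lceil Dk\rceil}\binom{k}{\ell}\binom{n-k}{\ell},
\]
is false in general. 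Take $n=10$, $k=4$, $D=3/4$, and $\hat S=[n]$ (so $k'=10$): every size-$4$ set $S$ has $\mathrm{MDR}(S,\hat S)=0$ and $\mathrm{FAR}(S,\hat S)=6/10\le D$, so the confusion list has size $\binom{10}{4}=210$, whereas the matched-cardinality bound is $\sum_{\ell=0}^{3}\binom{4}{\ell}\binom{6}{\ell}=195$. The phenomenon is generic for large $D$: by Vandermonde the matched sum equals $\binom{n}{k}$ at $\ell=k$, so once $\lceil Dk\rceil$ is past the peak of $\binom{k}{\ell}\binom{n-k}{\ell}$ the matched bound is barely below $\binom{n}{k}$ and oversized estimates can beat it. The lemma itself is not endangered in that regime (the denominator in \eqref{eq:fano} is then smaller than $\log 2$, so the right-hand side is negative), but your proof as assembled would need a separate argument to recognize this, and the inequality you rely on cannot provide it.

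The paper sidesteps the issue with a different (terse) move: since $d(\cdot,\cdot)$ is the maximum of the missed-detection and false-alarm rates, it asserts that one may take $|\hat S|=k$ without loss of generality, after which only the matched-cardinality count \eqref{eq:H_fano_e} is needed. To repair your version you would either have to justify such a reduction to cardinality-$k$ estimates (e.g., show that replacing $\hat S$ by a suitably chosen size-$k$ set does not increase $\Pr[d(S^*,\hat S)>D]$, or at least does not decrease the list bound in the regime where the bound is nonvacuous), or prove the list-size bound for mismatched $k'$ only when the denominator exceeds $\log 2$ and check that the statement is trivial otherwise. As it stands, the step you yourself flag as the main obstacle is closed by a counting claim that is not true.
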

\begin{proof}
We follow the proof of Fano's inequality given in \cite{ElementsofIT} with some modifications to handle our error criterion. To begin, we define the random variable 
$$E = \begin{cases} 1, & \text{if $d(S^*,\hat{S}) > D$}\\
0, &\text{if $d(S^*,\hat{S}) \le D$}
\end{cases},
$$
and note that $\Pr[E=1] =  \Pr[d(S^*,\hat{S}) > D].$

Using the chain rule for entropy, $H(E,S^* | \bY,\bA, \hat{S})$ can be written two ways as
\begin{align*}
H(E,S^* | \bY,\bA, \hat{S}) &= H(S^*|\bY,\bA,\hat{S}) + H(E | S^*, \bY, \bA,\hat{S})\\
& = H(E |\bY,\bA,\hat{S} ) + H(S^*|E, \bY, \bA,\hat{S}).
\end{align*}
By the Markov property, $H(S^*|\bY,\bA,\hat{S}) = H(S^*|\bY,\bA)$. Since entropy is nonnegative, $H(E | S^*, \bY, \bA,\hat{S}) \ge 0$. Also, since conditioning cannot increase entropy, $H(E |\bY,\bA,\hat{S})\le H(E) \le \log(2)$ and $H(S^* |E,\bY,\bA,\hat{S})\le H(S^*|E,\hat{S})$. Putting everything together we obtain
\begin{align}
 H(S^*|\bY,\bA) - \log 2 &\le H(S^*|E, \hat{S}) \\
 & = \Pr[E = 1] H(S^* |E=1, \hat{S}) \nonumber \\
 & \quad + \Pr[E=0] H(S^*|E =0, \hat{S})\label{eq:H_fano_c}
\end{align}

Since the uniform distribution maximizes the entropy of $S^*$, 
\begin{align}
H(S^* |E = 1, \hat{S}) \le  \log {n \choose k}.\label{eq:H_fano_d}
\end{align}
Also, since the distortion measure $d(\cdot,\cdot)$ corresponds to the maximum of the two detection error rates, we may assume without any loss of generality that $\hat{S}$ has cardinality $k$. Therefore, a simple counting argument gives
\begin{align}
H(S^* | E = 0, \hat{S}) \le  \log \bigg( \sum_{\ell = 0}^{\lfloor D k \rfloor} {k \choose \ell} {n-k \choose \ell} \bigg).\label{eq:H_fano_e}
\end{align}
Plugging \eqref{eq:H_fano_d} and \eqref{eq:H_fano_e} back into \eqref{eq:H_fano_c} and solving for the error probability $\Pr[E=1]$ completes the proof. 
\end{proof}

The next step in the proof is to verify that the right-hand side of \eqref{eq:fano} is bounded away from zero for all sequences of problems obeying the assumptions of Theorem~\ref{thm:LB_gen1}. For each problem of size $n$, let $k = \lceil \kappa\, n \rceil$ where the dependence on $n$ is implicit. Using Stirling's approximation \cite[Lemma 17.5.1]{ElementsofIT}, it is straightforward to verify that 
\begin{align}
\lim_{n \rw \infty}  \frac{1}{n} \bigg[ \log {n \choose k}-  \log \bigg(\sum_{\ell = 0}^{\lceil D k \rceil }{k \choose \ell} {n-k \choose \ell} \bigg) \bigg] = R(D;\kappa) \label{eq:R_D_lim}
\end{align}
where $R(D;\kappa)$ is given in \eqref{eq:R_D}.

Combining \eqref{eq:fano} and \eqref{eq:R_D_lim} it follows that a distortion $D$ is not achievable if 
\begin{align}
\limsup_{n \rw \infty} \frac{1}{n} I(S^*; \bY, \bA) < R(D;\kappa). \label{eq:nec_bound}
\end{align}

The remainder of the proof is dedicated to upper bounding the left-hand side of \eqref{eq:nec_bound}. Starting with the chain rule for mutual information, we have
\begin{align}
I(S^*;\bY,\bA) &= I(S^*;\bY|\bA) + I(S^*;\bA)\\
 &= I(S^*;\bY|\bA) \label{eq:I_bound_M3}\\
 & \le I(\bX; \bY|\bA) \label{eq:I_bound_data_process}
\end{align}
where \eqref{eq:I_bound_M3} follows from the independence of Assumption M3 and \eqref{eq:I_bound_data_process} follows from the data processing inequality and the fact that $S^* \rw \bX \rw \bY$ forms a Markov chain.

Next, we can write
\begin{align}
I(\bX;\bY|\bA = A)& = I(\bX; A\bX + \snr^{-1/2} \bW) \nonumber \\
&   =I\big(\bX - \bE[\bX]; A(\bX-\bE[\bX])+ \snr^{-1/2} \bW\big) \nonumber \\
&   \le \max_{Z} I(A \bZ;A \bZ + \snr^{-1/2} \bW) \label{eq:info_Z}
\end{align}
where the maximum is over all $n$-dimensional random vectors $\bZ$ obeying the power constraint
\begin{align}
\bE[\bZ \bZ^T]& = \bE\big[(\bX-\bE[\bX]) (\bX-\bE[\bX])^T\big]= V_X I_{n \times n}.
\end{align}
It is well known (see e.g.~\cite{ElementsofIT}) that the maximum of \eqref{eq:info_Z} is attained when the entries of $\bZ$ are i.i.d.~$\mathcal{N}(0,V_X)$, and thus we obtain
\begin{align}
I(\bX;\bY|\bA = A)& \le  \frac{1}{2} \log \det ( I_{m \times m} + \snr \, V_X AA^T). \label{eq:I_cond_A}
\end{align}

By the concavity of the log determinant, Hadamard's inequality, and Jensen's inequality we can bound the expectation of \eqref{eq:I_cond_A} with respect to a random matrix $\bA$ obeying the normalization of Assumption M3 as follows:
\begin{align}
& \bE\Big[\frac{1}{2} \log \det ( I_{m \times m} + \snr \, V_X \bA\bA^T) \Big] \nonumber\\
& \le  \frac{1}{2} \log \det\big ( I_{m \times m} + \snr \, V_X \bE\big[ \bA\bA^T]\big) \nonumber \\
& = \frac{m}{2} \log(1 + \snr\,V_X) \label{eq:I_gen_jen_a}.
\end{align}

Alternatively, starting with Sylvester's determinant theorem, we can write
\begin{align}
& \bE\Big[\frac{1}{2} \log \det ( I_{m \times m} + \snr \, V_X \bA\bA^T) \Big]\nonumber\\
  &=
 \bE\Big[\frac{1}{2} \log \det ( I_{n \times n} + \snr \, V_X \bA^T\bA) \Big] \nonumber \\
& \le  \frac{1}{2} \log \det\big ( I_{n \times n} + \snr \, V_X \bE\big[ \bA^T \bA]\big) \nonumber \\
& = \frac{n}{2} \log\Big(1 +  \frac{m}{n} \snr\,V_X\Big).\label{eq:I_gen_jen_b}
\end{align}
Combining \eqref{eq:I_cond_A}, \eqref{eq:I_gen_jen_a}, and \eqref{eq:I_gen_jen_b} gives
\begin{align*}
&I(\bX;\bY|\bA) \\
&\le  \min \bigg( \frac{m}{2} \log(1 + \snr\,V_X), \frac{n}{2} \log\Big(1 +  \frac{m}{n} \snr\,V_X\Big)\bigg),
\end{align*}
and hence
\begin{align}
&\limsup_{n \rw \infty} \frac{1}{n} I(S^*; \bY, \bA) \nonumber\\
&< \frac{\min(1,\rho)}{2} \log\big(1+ \max(1,\rho) V_X\, \snr\big), \label{eq:nec_bound_2}
\end{align}
for any sequence of matrices obeying Assumptions M1-M3. Combining \eqref{eq:nec_bound} and \eqref{eq:nec_bound_2} completes the proof of Theorem~\ref{thm:LB_gen1}.

\section{Proof of Lemma~\ref{prop:genie}}\label{sec:prop_genie_proof}

This proof is based on a genie argument. Suppose that a genie provides the recovery algorithm with the pair $(G,\bX_{G})$ where $G$ is a subset of the sparsity pattern $S^*$ and $\bX_{G}$ is a $|G|$-dimensional vector corresponding to the entries of $\bX$ indexed by $G$. Given this extra information, the recovery algorithm must then determine which of the remaining unknown entries $\{X_i : i \notin G\}$ are nonzero. Clearly, any lower bound on the achievable distortion $D$ in the genie-aided setting is also a lower bound on the achievable distortion in the original setting. 

In the following sections, we first describe how the genie selects the index set $G$. We then show that the resulting recovery problem is equivalent to the original recovery problem with altered parameters.

\subsection{Genie Selection Strategy}

The set $G$ is constructed as follows: each index $i =1,2,\cdots,n$ is reported, independently of the other indices, with probability $q(X_i)$ where the function $0 \le q(x) \le 1$ is chosen such that for all $t \in \mathbb{R}$, $$\Pr[ X_i \le t | \text{ $i$ is not reported}] = \Pr[ Z \le t]$$
where $Z \sim p_Z$. By the constraints \eqref{eq:kZ_bound} and \eqref{eq:pZ_bound} it can be verified that the function $q(x)$ exists and that $q(0) = 0$. In words, the genie ``prunes'' the entries of $\bX$ in a way such that the unreported entries are marginally distributed according to the distribution $p_Z$.

We now make several observations. First, since $q(0) = 0$, only nonzero entries are reported and so $G \subseteq S^*$. Second, since the indices are selected independently, the remaining nonzero entries $\{X_i : i  \in S^*\backslash G\}$ are i.i.d.~according to the nonzero part of $p_Z$. Finally, conditioned on the cardinality $|G|$, the set $S^*\backslash G$ is distributed uniformly over all subsets of $[n]\backslash G$ of size $|S^*| - |G|$.

As a consequence of the above observations, the sequence of vectors corresponding to $\bX_{[n] \backslash G}$ satisfies Assumptions SS1-SS2 with distribution $p_Z$. Moreover, if we let $\tilde{\bY}$ denote the measurements corresponding to the vector $\bX_{[n]\backslash G}$ and measurement matrix $\bA_{[n]\backslash G}$, i.e. 
\begin{align}
\tilde{\bY} &
= \bA_{[n]\backslash G} \bX_{[n]\backslash G}  + \frac{1}{\sqrt{\snr}} \bW, \label{eq:model_genie}
\end{align} 
then it is straightforward to show that an appropriately normalized version of the measurement model given by \eqref{eq:model_genie} obeys Assumptions M1-M3 with sampling rate $\tilde{\rho}$ and signal-to-noise ratio $\tilde{\snr}$.

\subsection{Lower Bound on Genie-Aided Recovery}

We now derive a necessary condition for recovery in the genie-aided setting. We begin with the following key fact: if the set $G$ is chosen according to the selection strategy outlined above, the tuple $(\tilde{\bY},\bA, G)$ is a sufficient statistic for estimation of $S^*$. To see why, observe that
\begin{align}
&I(S^*; \bY,\bA, G, \bX_G )\nonumber\\
&=I(S^*; \bY - \bA_G \bX_G ,\bA, G, \bX_G ) \label{eq:I_genie_a1}\\
 &=I(S^*; \tilde{\bY},\bA, G, \bX_G ) \label{eq:I_genie_a}\\
&=I(S^*; \tilde{\bY},\bA,G )  + I(S^*; \bX_G |\tilde{\bY},\bA ,G ) \label{eq:I_genie_b} \\
&=I(S^*; \tilde{\bY},\bA, G)  \label{eq:I_genie_c}
\end{align}
where: \eqref{eq:I_genie_a} follows from the definition of $\tilde{Y}$; \eqref{eq:I_genie_b} follows from the chain rule for mutual information; and 
\eqref{eq:I_genie_b} follows from the fact that $S^*$ and $\bX_G$ are conditionally independent given the pair $(\tilde{\bY},\bA,G)$. 

Let $\hat{S}$ denote the optimal estimate of the sparsity pattern in the genie-aided setting (i.e.~the sparsity pattern estimate that minimizes the error probability). By the arguments above, we know that
\begin{align}
S^* \rw (\tilde{\bY},\bA,G) \rw \hat{S} \label{eq:markovSG}
\end{align}
forms a Markov chain. Also, by the optimality of $\hat{S}$ and the fact that distortion measure $d(\cdot,\cdot)$ corresponds to the maximum of the two detection error rates, it can also be shown that $\hat{S}$ contains the set $G$ and has the same cardinality as $S^*$. Therefore, the sparsity pattern distortion can be expressed as
\begin{align}
d(S^*,\hat{S}) =\Big( \frac{|S^*| - |G|}{|S^*|}\Big) d(S^* \backslash G , \hat{S} \backslash G). \label{eq:d_SG}
\end{align}
Note that
\begin{align}
\lim_{n \rw \infty} \Big( \frac{|S^*| - |G|}{|S^*|}\Big) = \Big( \frac{1-\kappa_X}{\kappa_X}\Big) \Big( \frac{\kappa_Z}{1-\kappa_Z}\Big) \label{eq:limSG}
\end{align}
almost surely under Assumptions SS1-SS2.

We now arrive at the crux of the argument. Suppose that the distortion $\tilde{D}$ is not achievable for the tuple $(\tilde{\rho}, p_Z, \tilde{\snr})$. By \eqref{eq:markovSG} and the fact that the observation model given in \eqref{eq:model_genie} corresponds to the tuple $(\tilde{\rho}, p_Z, \tilde{\snr})$, it follows that the error probability $$\Pr[ d(S^* \backslash G , \hat{S} \backslash G) \ge \tilde{D}]$$ corresponding to the genie-aided setting is bounded away from zero for all $n$. By \eqref{eq:d_SG} and \eqref{eq:limSG}, it then follows that the distortion $D$ is not achievable for the tuple $(\rho, p_X, \snr)$. This concludes the proof of Lemma~\ref{prop:genie}.


\section{Proof of Theorem~\ref{thm:LB_iid_1}}\label{sec:thm:LB_iid_1_proof}

One weakness of the proof of Theorem~\ref{thm:LB_gen1} is that the data processing inequality used to upper bound the mutual information $I(S^*;\bY|\bA)$ in \eqref{eq:I_bound_data_process} is not tight. In this proof, we derive a stronger upper bound that takes into account the fact that the values of the nonzero elements are unknown. We assume throughout the proof that the nonzero entropy power $N_X$ is strictly positive. 

Using the chain rule for mutual information, $I(S^*,\bX;\bY|\bA)$ can be written two ways as
\begin{align*}
I(S^*,\bX;\bY|\bA) & = I(S^*;\bY|\bA) + I(\bX;\bY|S^*,\bA)\\
& =  I(\bX;\bY,\bA) +I(S^*;\bY|\bX,\bA).
\end{align*}
Since $\bS \rw \bX \rw (\bY,\bA)$ forms a Markov chain, the mutual information $I(S^*;\bY|\bX,\bA)$ is equal to zero and 
\begin{align}
I(S^*;\bY|\bA) = I(\bX;\bY|\bA) - I(\bX;\bY|S^*,\bA).\label{eq:chain_rule}
\end{align}
Conceptually, the term $I(\bX;\bY|S^*,\bA)$ quantifies the amount of $I(\bX;\bY|\bA)$ that is ``used up'' describing the values of the nonzero elements, and hence cannot contribute to estimation of the sparsity pattern. 

Following the proof of Theorem~\ref{thm:LB_gen1}, the first term on the right-hand side of \eqref{eq:chain_rule} can be upper bounded as
\begin{align}
I(\bX;\bY|\bA) \le \frac{1}{2}\bE\Big[ \log \det \big( I_{m \times m} + \snr\, V_X \bA\bA^T\big)\Big] \label{eq:I_A_B1}
\end{align}
where the expectation is taken with respect to the random matrix $\bA$. 

To deal with the second term on the right-hand side of \eqref{eq:chain_rule} we first consider the case $m  \le k$. If we let 
\begin{align}
N(\bZ) = \frac{1}{2 \pi e} \exp\Big( \frac{2}{m} h(\bZ)\Big)
\end{align}
denote the entropy power of an $m$-dimensional random vector $\bZ$, then it follows straightforwardly that
\begin{align}
&I(\bX;\bY|S^*=S,\bA=A)\nonumber \\
 & = I(\bX_S ;  \sqrt{\snr} A_S \bX_S + \bW) \nonumber \\
& = h(\sqrt{\snr} A_S \bX_S +  \bW) - h( \sqrt{\snr} A_S \bX_S + \bW | \bX_S) \nonumber \\
& = \frac{m}{2} \log \big( 2 \pi e\, N( \sqrt{\snr}A_S \bX_S + \bW)\big)  - \frac{m}{2} \log( 2 \pi e) \nonumber \\
& = \frac{m}{2} \log \big(  N(\sqrt{\snr} A_S \bX_S +  \bW) \big) \label{eq:I_N1}.
\end{align}

Using a generalization of the entropy power inequality \cite{zamir:1993}, we can write
\begin{align}
N(\sqrt{\snr} A_S \bX_S +  \bW) &\ge N(\sqrt{\snr} A_S \bX_S) + N(\bW)\\
& \ge\snr\, \Big(\frac{  N_X}{\kappa} \Big) \det( A_S A_S^T)^{1/m} + 1, \label{eq:EPI_1}
\end{align}
where $N_X = \kappa N(X_i | i \in S^*)$ denotes the nonzero entropy power of $p_X$. Note that the assumption $m \le k$ is critical here since the determinant $A_S A_S^T$ is equal to zero for all $m <k$. 

Plugging \eqref{eq:EPI_1} back into \eqref{eq:I_N1} leads to
\begin{align}
&I(\bX;\bY|S^*,\bA) \nonumber\\ & \ge \frac{m}{2} \bE \Big[ \log\big( 1+\snr\,  N_X\, \kappa^{-1} \det( \bA_{S^*} \bA_{S^*}^T)^{1/m}  \big) \Big] \label{eq:I_A_B2}
\end{align}
where the expectation is with respect to the random matrix $\bA_{\bS^*}$.

Next we consider the case $m > k$. If the matrix $A_S$ is full rank and we let $A_{S}^\dagger$ denote its Moore-Penrose pseudoinverse, we can write
\begin{align}
&I(\bX;\bY|S^*=S,\bA=A)\nonumber\\ 
 & = I(\bX_S ;  \sqrt{\snr} \, \bX_S + A^\dagger_S \bW) \nonumber \\
& = h(\sqrt{\snr} \, \bX_S +  A^\dagger_S \bW) - h( \sqrt{\snr}  \bX_S + A^\dagger_S \bW | \bX_S) \nonumber\\
& = \frac{k}{2} \log \big(  N(\sqrt{\snr} \, \bX_S +  A^\dagger_S  \bW) \big)  + \frac{1}{2} \log\det( A_S^T A_S) \nonumber \\
&\ge  \frac{k}{2} \log\big( 1+\snr\,  N_X\, \kappa^{-1} \det( A_{S}^T A_{S})^{1/k}  \big) \label{eq:I_N2}
\end{align}
where \eqref{eq:I_N2} follows again from the entropy power inequality. Thus, we obtain 
\begin{align}
&I(\bX;\bY|S^*,\bA)  \nonumber \\
& \ge \frac{k}{2} \bE \Big[ \log\big( 1+\snr\,  N_X\, \kappa^{-1} \det( \bA^T_{S^*} \bA_{S^*})^{1/k}  \big) \Big], \label{eq:I_A_B3}
\end{align}
where the expectation is with respect to the random matrix $\bA_{\bS^*}$.  

Finally, to characterize the asymptotic behavior of the bounds in \eqref{eq:I_A_B1}, \eqref{eq:I_A_B2}, and \eqref{eq:I_A_B3}, we use use the fact that the spectral distributions of the matrices $\bA$ and $\bA_S$ converge to a non-random limit known as the Marcenko--Pastur Law (see Appendix~\ref{sec:random_matrix}).

Combining Lemma~\ref{lem:MP_exp} in Appendix~\ref{sec:random_matrix} with the upper bound \eqref{eq:I_A_B1} leads immediately to
\begin{align}
\limsup_{n \rw \infty} \frac{1}{n} I(\bX;\bY|\bA) \le \frac{1}{2} \mathcal{V}(\rho, V_X \, \snr).
\end{align}
Similarly, combining Lemma~\ref{lem:MP_geom}  in Appendix~\ref{sec:random_matrix} with the lower bounds \eqref{eq:I_A_B2} and \eqref{eq:I_A_B3} leads to
\begin{align}
\liminf_{n \rw \infty} \frac{1}{n} I(\bX;\bY|S^*, \bA) \ge \frac{1}{2} \kappa \mathcal{V}_{LB}(\rho/\kappa , N_X \, \snr)
\end{align}
where $\mathcal{V}_{LB}(r,\gamma)$ is given by \eqref{eq:V_LB_funct}. Plugging these limits back into \eqref{eq:chain_rule} and \eqref{eq:nec_bound} completes the proof of Theorem~\ref{thm:LB_iid_1}.

\section{Asymptotic Spectral Convergence} \label{sec:random_matrix}

This appendix states two useful results from random matrix theory and gives bounds on the functions $\mathcal{V}(r,\gamma)$ and $\mathcal{V}_\text{LB}(r,\gamma)$ introduced Theorem~\ref{thm:LB_iid_1}.

\begin{lemma}\label{lem:MP_exp} {\cite{VerSha99}} Let $\bA$ denote an $m \times n$ random matrix whose entries are i.i.d.~with mean zero and variance $1/n$. If $m/n \rw r$ as $n \rw \infty$, then
\begin{align}
\lim_{n \rw \infty} \frac{1}{n} \log \det \big( I_{m\times m} + \gamma \bA\bA^T  \big) = \mathcal{V}( r,\gamma ) \label{eq:V_a}
\end{align}
almost surely where $\mathcal{V}(r,\gamma)$ is given by \eqref{eq:V_funct}.
\end{lemma}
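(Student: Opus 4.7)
The plan is to apply the Marchenko--Pastur theorem together with an explicit computation of the Shannon transform of the limiting spectral distribution. First I would rewrite the log-determinant as a linear spectral statistic: letting $\lambda_1,\dots,\lambda_m$ denote the eigenvalues of $\bA\bA^T$ and $F_n$ its empirical spectral distribution,
\begin{align*}
\frac{1}{n}\log\det(I_{m\times m}+\gamma\bA\bA^T)
&= \frac{1}{n}\sum_{i=1}^m \log(1+\gamma\lambda_i) \\
&= \frac{m}{n}\int_0^\infty \log(1+\gamma x)\,dF_n(x).
\end{align*}
This reduces the problem to establishing almost-sure convergence of an integral of a fixed continuous function against a random probability measure.

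Next I would invoke the classical Marchenko--Pastur theorem (e.g., Bai--Silverstein), which under Assumption M4 gives that $F_n$ converges weakly, almost surely, to the Marchenko--Pastur distribution $F_{MP}^{(r)}$ with aspect ratio $r$. This limiting law is supported on the bounded interval $[(\max(1-\sqrt{r},0))^2,\,(1+\sqrt{r})^2]$, with an additional atom at $0$ of mass $1-1/r$ when $r>1$; the atom contributes nothing since $\log(1+\gamma\cdot 0)=0$. Because the support is bounded and $x\mapsto \log(1+\gamma x)$ is continuous, weak convergence is enough to pass to the limit on any compact set. To handle the full (unbounded) half-line, I would upgrade weak convergence to convergence of integrals of $\log(1+\gamma x)$ using the almost-sure Bai--Yin bound $\lambda_{\max}(\bA\bA^T)\to (1+\sqrt{r})^2$, which truncates the integrand outside a compact interval for all large $n$.

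The final step is to evaluate $r\int_0^\infty \log(1+\gamma x)\,dF_{MP}^{(r)}(x)$ in closed form. The standard route is through the Stieltjes transform $s(z)$ of $F_{MP}^{(r)}$, which satisfies a quadratic fixed-point equation whose solution yields the quantity $\mathcal{F}(r,\gamma)$ (written in the statement as a difference of square roots). Differentiating the Shannon transform in $\gamma$ and using $\frac{d}{d\gamma}\log(1+\gamma x)=x/(1+\gamma x)$ relates the integral to $s(-1/\gamma)$; integrating back in $\gamma$ and collecting terms produces the three summands in \eqref{eq:V_funct}, including the $-\mathcal{F}(r,\gamma)/\gamma$ correction term. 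This is the classical Verd\'u--Shamai Shannon-transform formula.

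The main obstacle is the passage from weak convergence of $F_n$ to convergence of $\int \log(1+\gamma x)\,dF_n(x)$, since $\log(1+\gamma x)$ is unbounded on $[0,\infty)$. The cleanest resolution is the Bai--Yin largest-eigenvalue bound, which, combined with the non-negativity of the integrand, lets us truncate at a deterministic constant (say $2(1+\sqrt{r})^2$) for all sufficiently large $n$ almost surely. The explicit Stieltjes-transform computation in the last step is routine but algebraically involved; the nontrivial structural fact it uses is that the Marchenko--Pastur Stieltjes transform admits a closed-form quadratic solution, which is ultimately what produces the $\mathcal{F}(r,\gamma)$ term.
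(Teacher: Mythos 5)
The paper does not actually prove this lemma: it is imported verbatim from Verd\'u and Shamai \cite{VerSha99}, so there is no internal proof to compare against. Your reconstruction follows the standard derivation of that result and is essentially sound in outline: the reduction of the log-determinant to a linear spectral statistic of $\bA\bA^T$, the almost-sure weak convergence of its empirical spectral distribution to the Marchenko--Pastur law with ratio $r$ (including the atom at zero of mass $1-1/r$ when $r>1$, which indeed contributes nothing), and the Stieltjes/Shannon-transform evaluation that produces the $\mathcal{F}(r,\gamma)$ term and the three summands of \eqref{eq:V_funct} are all correct.

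The one step that would fail as written is the truncation via the Bai--Yin bound. Almost-sure convergence (indeed even almost-sure boundedness) of $\lambda_{\max}(\bA\bA^T)$ to $(1+\sqrt{r})^2$ requires a finite fourth moment of the entries, whereas the lemma assumes only mean zero and variance $1/n$; under second moments alone the largest eigenvalue can blow up, so you cannot truncate at a deterministic constant for all large $n$. The standard repair needs no extra moment assumptions: since $0\le\log(1+\gamma x)\le\gamma x$, the contribution of eigenvalues exceeding a level $K$ is at most $\gamma\bigl(\tfrac{1}{n}\tr(\bA\bA^T)-\tfrac{m}{n}\int_0^K x\,dF_n(x)\bigr)$. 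The trace term equals $\tfrac{1}{n}\sum_{i,j}A_{ij}^2\to r$ almost surely by the law of large numbers (variance assumption only), while the truncated first moment converges by weak convergence to $r\int_0^K x\,dF^{(r)}_{\mathrm{MP}}(x)$, which approaches $r$ as $K\to\infty$ because the Marchenko--Pastur law has unit mean; hence the tail contribution is uniformly negligible. With that uniform-integrability argument in place of Bai--Yin, your proof is complete under exactly the stated hypotheses and coincides with the classical Verd\'u--Shamai derivation the paper cites.
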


\begin{lemma}\label{lem:MP_geom} {\cite{Salo06}} Let $\bA$ denote an $m \times n$ random matrix whose entries are i.i.d.~with mean zero and variance $1/n$. If $m/n \rw r$ as $n \rw \infty$, then
\begin{align}
\lim_{n \rw \infty} \big( \det(\bA \bA^T)\big)^{1/m} &= \Big( \frac{1}{1-r} \Big)^{1/r- 1} \frac{1}{e},  &\text{if $r < 1$}\\
\lim_{n \rw \infty} \big( \det(\bA^T \bA)\big)^{1/n} &=  \frac{1}{e},  &\text{if $r= 1$}\\
\lim_{n \rw \infty} \big( \det(\bA^T \bA)\big)^{1/n} &= \Big( \frac{r}{r-1} \Big)^{r- 1} \frac{1}{e},  &\text{if $r> 1$}
\end{align}
almost surely.
\end{lemma}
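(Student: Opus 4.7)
My plan is to express each quantity as the geometric mean of the eigenvalues of $\bA\bA^T$ or $\bA^T\bA$ and then invoke almost sure convergence of the empirical spectral distribution to the Mar\v{c}enko--Pastur (MP) law.

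For $r<1$, denote the $m$ eigenvalues of $\bA\bA^T$ by $\lambda_1,\dots,\lambda_m$. Then
\begin{align*}
\log\bigl(\det(\bA\bA^T)\bigr)^{1/m} = \frac{1}{m}\sum_{i=1}^m \log\lambda_i = \int \log x \, d\hat\mu_n(x),
\end{align*}
where $\hat\mu_n=\frac{1}{m}\sum_i\delta_{\lambda_i}$ is the empirical spectral measure. The Mar\v{c}enko--Pastur theorem guarantees that $\hat\mu_n$ converges weakly and almost surely to the MP law $\mu_r$ supported on $[b_-,b_+]=[(1-\sqrt{r})^2,(1+\sqrt{r})^2]$ with density
\begin{align*}
f_r(x) = \frac{1}{2\pi r x}\sqrt{(b_+-x)(x-b_-)}.
\end{align*}
The cases $r=1$ and $r>1$ I would handle symmetrically by working with $\bA^T\bA$ instead; this is necessary because, for $r>1$, $\bA\bA^T$ acquires a deterministic atom of mass $1-1/r$ at zero, which would make the geometric mean of its $m$ eigenvalues collapse to zero. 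The matrix $\bA^T\bA$ is almost surely full rank when $r\ge 1$, and its nonzero eigenvalues agree with those of $\bA\bA^T$.

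The main obstacle is that $\log x$ is unbounded near zero, so weak convergence of $\hat\mu_n$ alone does not imply convergence of $\int\log x \, d\hat\mu_n$. I would address this with Bai--Yin--type bounds on the extreme singular values: almost surely the largest eigenvalue of $\bA\bA^T$ converges to $b_+$, and for $r<1$ the smallest eigenvalue converges to $b_->0$, with the analogous statement for $\bA^T\bA$ when $r>1$. Consequently, with probability tending to one all relevant eigenvalues lie in a compact subset of $(0,\infty)$ on which $\log$ is bounded and continuous, and weak convergence upgrades to convergence of the logarithmic moment. The delicate borderline case $r=1$ requires a more refined argument: the smallest singular value of $\bA$ is of order $1/n$, so $\log\lambda_{\min}=O(\log n)$ and the contribution of the bottom few eigenvalues to the normalized sum $\frac{1}{n}\sum\log\lambda_i$ is still $o(1)$, which can be made rigorous using the known limit law for the smallest singular value of a square i.i.d.\ matrix.

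Finally, I would evaluate $\int\log x \, d\mu_r(x)$ explicitly. The substitution $x=1+r-2\sqrt{r}\cos\theta$ reduces each integral to an elementary trigonometric one, yielding
\begin{align*}
\exp\!\Bigl(\int \log x \, d\mu_r(x)\Bigr) = \Bigl(\frac{1}{1-r}\Bigr)^{1/r-1}\frac{1}{e} \quad (r<1),
\end{align*}
and, working with $\bA^T\bA$, the analogous expressions $1/e$ at $r=1$ and $r\bigl(r/(r-1)\bigr)^{r-1}/e$ for $r>1$. Continuity at $r=1$ (taking the one-sided limits of the two adjacent expressions, via $u\log u\to 0$ as $u\downarrow 0$) provides a consistency check on the formulas, completing the proof.
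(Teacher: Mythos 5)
Your overall strategy --- writing the normalized log-determinant as the logarithmic moment of the empirical spectral measure, invoking almost-sure convergence to the Mar\v{c}enko--Pastur law, controlling the spectral edges so that the unbounded integrand $\log x$ is harmless, and then evaluating the limiting integral in closed form --- is the standard route to this result; the paper itself gives no proof and simply cites \cite{Salo06}, so there is no internal argument to compare against. One important point, though: your computed constant for $r>1$, namely $r\,(r/(r-1))^{r-1}e^{-1}$, differs by a factor of $r$ from the displayed statement. Your value is in fact the correct limit of $(\det(\bA^T\bA))^{1/n}$ under the stated variance-$1/n$ normalization: the eigenvalues of $\bA^T\bA$ concentrate on $r$ times a Mar\v{c}enko--Pastur law of ratio $1/r$, which adds $\log r$ to the logarithmic mean. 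You can cross-check this against the large-$\gamma$ expansion of $\mathcal{V}(r,\gamma)$ in \eqref{eq:V_funct}, which gives $\lim_{\gamma\rw\infty}\big[\mathcal{V}(r,\gamma)-\log\gamma\big]=r\log\tfrac{r}{r-1}+\log(r-1)-1=\log\big(r(\tfrac{r}{r-1})^{r-1}e^{-1}\big)$ for $r>1$, and it is the value consistent with the factor $r$ appearing inside $\mathcal{V}_\text{LB}(r,\gamma)$ in \eqref{eq:V_LB_funct}; the extra ``$r\gamma$'' in \eqref{eq:VLB_b} is what reconciles the printed lemma with the rest of the paper. So you should state explicitly that you are proving the constant with the factor $r$ (and note that your continuity check at $r=1$ cannot detect the discrepancy, since both expressions tend to $1/e$ as $r\downarrow 1$).

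Two technical gaps remain. First, almost-sure convergence of the extreme eigenvalues to the bulk edges (Bai--Yin) requires finite fourth moments, which the hypotheses here (i.i.d., mean zero, variance $1/n$) do not provide; you need either to add that assumption or to insert a truncation argument. Second, and more seriously, your treatment of $r=1$ is not adequate: at the hard edge a growing number of eigenvalues lie near zero, so weak convergence together with control of the single smallest eigenvalue does not yield convergence of $\tfrac1n\sum_i\log\lambda_i$; one needs a uniform-integrability argument, e.g.\ bounds on the number of singular values below a small threshold together with least-singular-value bounds (the log-determinant of a nearly square i.i.d.\ matrix is a genuinely delicate result). A clean fix is to prove the lemma rigorously for $r<1$ and $r>1$, where the limiting spectrum is bounded away from zero, and to handle $r=1$ separately or note that it is only a boundary case of how the lemma is used through $\mathcal{V}_\text{LB}$.
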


Under the assumptions of Lemma~\ref{lem:MP_geom}, it thus follows that
\begin{align}
\lim_{n \rw \infty} r \log\big(1+   \gamma \det(\bA\bA^T)^{1/m}\big) = \mathcal{V}_{LB}(r,\gamma),\quad r \le 1 \label{eq:VLB_a}
\end{align}
and
\begin{align}
\lim_{n \rw \infty} \log\big(1+ r  \gamma \det(\bA\bA^T)^{1/n}\big) = \mathcal{V}_{LB}(r,\gamma), \quad r > 1.\label{eq:VLB_b}
\end{align}

The functions $\mathcal{V}(r,\gamma)$ and $\mathcal{V}_\text{LB}(r,\gamma)$ obey the following series of inequalities:
\begin{align}
r \log(1+r \gamma) 
& \ge \min(1,r) \log\big(1+ \max(1,r) \gamma\big) \label{eq:Vinq_a}\\
& \ge \mathcal{V}(r,\gamma)    \label{eq:Vinq_b}\\
& \ge \mathcal{V}_\text{LB}(r,\gamma)  \label{eq:Vinq_c}\\
& \ge  \min(1,r) \log\big(1+ \max(1,r) \gamma / e\big), \label{eq:Vinq_d}
\end{align}
where \eqref{eq:Vinq_a} and \eqref{eq:Vinq_b} follow from the concavity of the logarithm and Jensen's inequality and \eqref{eq:Vinq_c} follows from \eqref{eq:V_a}, \eqref{eq:VLB_a}, \eqref{eq:VLB_b}, and Hadamard's inequality.

The next result shows that functions $\mathcal{V}(r,\gamma)$ and $\mathcal{V}_\text{LB}(r,\gamma)$ behave similarly when $\gamma$ is large.

\begin{lemma} \label{lem:VLB_comp}
For any $r < 1$, 
\begin{align}
\lim_{\gamma \rw \infty}\max_{0 \le s \le r} \Big| \mathcal{V}(s,\gamma) - V_\text{LB}(s,\gamma)\Big| = 0.
\end{align}
\end{lemma}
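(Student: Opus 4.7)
The plan is to reduce the claim to a pair of explicit asymptotic expansions in $\gamma$, where the uniformity over $s \in [0,r]$ comes from the strict inequality $r<1$ keeping $(1-s)^{-1}$ bounded by $(1-r)^{-1}$. Since the pointwise limit is zero, this will yield an error of order $O(1/\gamma)$ uniformly on $[0,r]$.

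The first step is to expand $\mathcal{F}(s,\gamma)$. Using the identity
\[
\sqrt{\gamma a_+^2+1}-\sqrt{\gamma a_-^2+1}=\frac{4\gamma\sqrt{s}}{\sqrt{\gamma a_+^2+1}+\sqrt{\gamma a_-^2+1}}
\]
with $a_\pm=1\pm\sqrt{s}$, and expanding each square root as $a_\pm\sqrt{\gamma}+\tfrac{1}{2a_\pm\sqrt{\gamma}}+O(\gamma^{-3/2})$, a direct calculation gives
\[
\mathcal{F}(s,\gamma)=s\gamma-\frac{s}{1-s}+O(1/\gamma),
\]
with implicit constant of order $(1-r)^{-2}$ uniformly on $[0,r]$. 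Substituting this into the defining formula for $\mathcal{V}$ and using $1+\gamma-\mathcal{F}=(1-s)\gamma+(1-s)^{-1}+O(1/\gamma)$, $1+s\gamma-\mathcal{F}=(1-s)^{-1}+O(1/\gamma)$, and $\mathcal{F}/\gamma=s+O(1/\gamma)$, yields
\[
\mathcal{V}(s,\gamma)=s\log\gamma+s\log(1-s)-\log(1-s)-s+O(1/\gamma).
\]
For $\mathcal{V}_{LB}$ in the regime $s<1$, I expand $s\log(1+C(s)\gamma)$ with $C(s)=(1-s)^{-(1/s-1)}/e$; the function $C$ extends continuously to $C(0)=1$, so its range is a compact subset of $(0,\infty)$ over $[0,r]$. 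This gives
\[
\mathcal{V}_{LB}(s,\gamma)=s\log\gamma-(1-s)\log(1-s)-s+O(1/\gamma).
\]
Subtracting, the $s\log\gamma$ and $-s$ terms cancel while the coefficients of $\log(1-s)$ combine as $[\,s-1+(1-s)\,]\log(1-s)=0$, leaving $\mathcal{V}(s,\gamma)-\mathcal{V}_{LB}(s,\gamma)=O(1/\gamma)$ uniformly in $s\in[0,r]$, from which the lemma follows upon sending $\gamma\rw\infty$.

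The main obstacle is not the pointwise convergence — that is immediate from the expansions — but rather verifying that all error terms are genuinely uniform on $[0,r]$. One must check both endpoints separately: near $s=r$ the constants in the $O(1/\gamma)$ bounds are controlled by $(1-r)^{-1}$, which is finite thanks to $r<1$; near $s=0$ one must verify that the bounded-argument logarithms in $\mathcal{V}$ and $\mathcal{V}_{LB}$ extend continuously (in particular that $(1-s)^{-(1/s-1)}\rw e$ as $s\rw 0^+$), so that the implicit constants do not degenerate. Compactness of $[0,r]$ together with the continuity of the coefficients then upgrades pointwise $O(1/\gamma)$ bounds to the required uniform statement.
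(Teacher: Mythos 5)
Your proposal is correct: I checked the central expansion and indeed $\mathcal{F}(s,\gamma)=s\gamma-\tfrac{s}{1-s}+O(1/\gamma)$ uniformly on $[0,r]$ (e.g.\ via $\mathcal{F}=\tfrac{\gamma(1+s)}{2}+\tfrac12-\tfrac12\sqrt{\gamma^2(1-s)^2+2\gamma(1+s)+1}$), and the resulting expansions of $\mathcal{V}$ and $\mathcal{V}_\text{LB}$ and the cancellation of the $\log(1-s)$ coefficients are as you state, with all error constants controlled by powers of $(1-r)^{-1}$ and by the continuous extension $C(0)=1$. Your route differs from the paper's in a modest but real way: the paper only establishes a one-sided bound, $\gamma r-\mathcal{F}(r,\gamma)\le \tfrac{r}{1-r}+\tfrac{1}{2\gamma(1-r)}$ (via $\sqrt{1+2x}\le 1+x$), which yields an upper bound $\mathcal{V}_\text{UB}$; it then invokes the previously derived inequality $\mathcal{V}(r,\gamma)\ge\mathcal{V}_\text{LB}(r,\gamma)$ (itself obtained from the spectral limits together with Jensen/Hadamard) to sandwich $0\le\mathcal{V}-\mathcal{V}_\text{LB}\le\mathcal{V}_\text{UB}-\mathcal{V}_\text{LB}$, and finally checks that the maximum over $[0,r]$ of the latter gap is attained at $s=r$ and vanishes. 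Your argument instead computes matching two-sided asymptotic expansions of both $\mathcal{V}$ and $\mathcal{V}_\text{LB}$ directly; this is self-contained (it does not rely on the inequality chain built from the random-matrix lemmas), it produces an explicit $O(1/\gamma)$ rate, and it makes the uniformity over the compact interval explicit, which the paper dispatches with ``it is straightforward to verify.'' The only cosmetic caveat is that the precise power of $(1-r)^{-1}$ in your implicit constants (you say order $(1-r)^{-2}$; a careful bookkeeping gives a slightly higher power) is immaterial to the conclusion, since any finite power suffices for uniformity once $r<1$ is fixed.
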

\begin{proof}
With a bit of algebra, it can be verified that
\begin{align}
&\gamma r - \mathcal{F}(r,\gamma) \nonumber \\
& = \frac{1}{2} \Big[ \gamma(1-r) \Big( \sqrt{1 + \tfrac{2}{\gamma(1-r)} \left( \tfrac{1+r}{1-r} + \tfrac{1}{2 \gamma(1-r)} \right)} -1 \Big) -1 \Big] \nonumber\\
& \le \frac{r}{1-r} + \frac{1}{2 \gamma (1-r)} \label{eq:Vbounda}
\end{align}
where \eqref{eq:Vbounda} follows from the bound $\sqrt{1+2 x} \le 1+x$. Plugging this inequality back into the definition of $\mathcal{V}(r,\gamma)$ gives an upper bound $\mathcal{V}_\text{UB}(r,\gamma)$. 
At this point it is straightforward to verify that
\begin{align*}
&\lim_{\gamma \rw \infty}\max_{0 \le s \le r} \Big| \mathcal{V}_\text{UB}(s,\gamma) - V_\text{LB}(s,\gamma)\Big| \\
& = \lim_{\gamma \rw \infty} \Big| \mathcal{V}_\text{UB}(r,\gamma) - V_\text{LB}(r,\gamma)\Big| = 0,
\end{align*}
which completes the proof.
\end{proof}

\section{Proofs of Low-Distortion Behavior} \label{sec:low_D_appendix}

\subsection{Proof of Corollary~\ref{prop:lowD_SNR}}

For this proof we begin with the bound in Corollary~\ref{cor:LB_gen2a} evaluated with $D' = \min(1,\alpha D)$. For all $D < 1/\alpha$, this gives
\begin{align}
\rho^* &\ge \frac{ 2(1-\kappa + \alpha \kappa D) R(\frac{1}{\alpha}, \frac{\kappa \alpha D}{1 - \kappa + \alpha \kappa D})}{\log(1+P(\alpha D;p_X)\,\snr)}\\
& \ge \frac{ 2(1-\kappa + \alpha \kappa D) R(\frac{1}{\alpha}, \frac{\kappa \alpha D}{1 - \kappa + \alpha \kappa D})}{P(\alpha D;p_X)\,\snr}, \label{eq:rho_Low_D_b}
\end{align}    
where \eqref{eq:rho_Low_D_b} follows from the bound $\log(1+x) \le x$.

Next, we consider the numerator in \eqref{eq:rho_Low_D_b}. Observe that
\begin{align}
& R\Big(\frac{1}{\alpha}, \frac{\kappa \alpha D}{1 - \kappa + \alpha \kappa D}\Big) \nonumber\\
 & = H\Big(\frac{\kappa \alpha D}{1 - \kappa + \alpha \kappa D}\Big) - \Big(\frac{\kappa \alpha D}{1 - \kappa + \alpha \kappa D}\Big) H\Big(\frac{1}{\alpha}\Big)\nonumber\\
 & \quad - \Big(\frac{1-\kappa}{1 - \kappa + \alpha \kappa D}\Big) H\Big(\frac{\kappa D}{1-\kappa}\Big).
\end{align}
Using the fact that, for any constant $c > 0$,
\begin{align}
\lim_{p \rw 0} \frac{H(c\,p)}{p \log(1/p)} = c,
\end{align}
it thus follows that
\begin{align}
\lim_{D \rw 0} \frac{(1-\kappa + \alpha \kappa D) R(\frac{1}{\alpha}, \frac{\kappa \alpha D}{1 - \kappa + \alpha \kappa D})}{D\log(1/D)} = (\alpha -1) \kappa. \label{eq:R_lim}
\end{align}
Plugging \eqref{eq:R_lim} back into \eqref{eq:rho_Low_D_b} completes the proof. 

\subsection{Proof of Corollary~\ref{cor:lowD}}

We begin with distributions that are bounded away from zero. By the definition of $P(D;\kappa)$, it follows straightforwardly that
\begin{align}
P(\alpha D;p_X) \ge \alpha \kappa  D B^2 \label{eq:PLB_bounded}
\end{align}
for all distributions $p_X$ in the bounded class $\mathcal{P}_\text{Bounded}(\kappa,B)$. Combining \eqref{eq:PLB_bounded} with Corollary~\ref{prop:lowD_SNR} gives
\begin{align}
\liminf_{D \rw 0} \frac{\rho^*}{ \log(1/D)} \ge \Big(\frac{\alpha -1}{\alpha}\Big) \frac{2}{B^2 \cdot \snr}.
\end{align}
Since $\alpha > 1$ is arbitrary, the leading term $(\alpha -1)/\alpha$ can be made arbitrarily close to one.

Next we consider distributions with polynomial decay. In \cite[Eq. (215)]{RG_IEEE_12}, it is shown that
\begin{align}
\lim_{D \rw 0} \frac{P(\alpha D;p_X)}{D^{1+2/L}} = \alpha^{1+2/L} \cdot \frac{ \kappa\, \tau^{-2/L}}{1+2L} \label{eq:PLB_poly}
\end{align}
for all distributions $p_X$ in the polynomial decay class $\mathcal{P}_\text{Poly.}(\kappa,L,\tau)$. Combining \eqref{eq:PLB_poly} with Corollary~\ref{prop:lowD_SNR} gives
\begin{align}
\liminf_{D \rw 0} \frac{\rho^*}{D^{2/L} \log(1/D)} \ge \Big(\frac{\alpha -1}{\alpha^{1+2/L}}\Big) \frac{2  (1+2/L)}{\tau^{-2/L}\cdot \snr}. \label{eq:PLB_polyb}
\end{align}
Since $\alpha > 1$ is arbitrary, the leading term on the right-hand side of \eqref{eq:PLB_polyb} can be optimized by choosing 
$\alpha = 1+L/2$. This completes the proof.

\section{Proofs of High-SNR Behavior} \label{sec:high_SNR_appendix}

\subsection{Proof of Corollary~\ref{thm:high_SNR_1}}

For this result, we compute the infinite SNR limit of the left-hand side of \eqref{eq:LB_iid_1}. Since, the achievable distortion is non-increasing in the SNR, this limit gives a valid lower bound for any SNR. Using Lemma~\ref{lem:VLB_comp} in Appendix~\ref{sec:random_matrix}, we have
\begin{align*}
&\lim_{\snrs \rw \infty} \mathcal{V}(\rho,V_X\,\snr) - \kappa \mathcal{V}_{LB}(\rho/\kappa, N_X\, \snr) \nonumber\\
& = \lim_{\snrs \rw \infty} \mathcal{V}_\text{LB}(\rho,V_X\,\snr) - \kappa \mathcal{V}_{LB}(\rho/\kappa, N_X\, \snr)\\
& = \rho \log \Big( \frac{V_X}{N_X} \Big)  + (1-\rho) \log \big( \tfrac{1}{1-\rho} \big) - (\kappa - \rho) \log \big( \tfrac{\kappa}{\kappa - \rho} \big)
\end{align*}
where we have used the fact that $\rho$ and $\rho/\kappa$ are both less than one. 

\subsection{Proof of Corollary~\ref{thm:high_SNR_2}}
Similar to the proof of Corollary~\ref{thm:high_SNR_1}, we study the high SNR behavior of the left-hand side of \eqref{eq:LB_iid_1}. To begin, let $(D,p_X)$ be a fixed pair satisfying \eqref{eq:DinfLim}. For each $\gamma \ge 0$, let $\rho_\gamma$ denote the unique solution to the fixed point equation:
\begin{align}
\mathcal{V}(\rho_\gamma,V_X\, \gamma) - \kappa \mathcal{V}_\text{LB}(\rho_\gamma/\kappa,N_X\, \gamma) = 2 R(D;\kappa).\label{eq:fixpoint} 
\end{align}
Clearly, $\rho_\gamma$ gives a lower bound on the sampling rate distortion function $\rho^*$ evaluated with $\snr  =\gamma$. 

We are interested in the behavior of $\rho_\gamma$ as $\gamma$ becomes large. By Corollary~\ref{thm:high_SNR_1} it follows that $\rho_\gamma \ge \kappa$ for all $\gamma$. By inspection of the left-hand side of \eqref{eq:fixpoint}, it also follows that $\limsup_{\gamma \rw \infty}\rho_\gamma \le \kappa$, since otherwise the left-hand side would increase without bound as $\gamma \rw \infty$. Therefore, we conclude that
\begin{align}
\rho_\gamma = \kappa + o(\gamma), \label{eq:r_gamma}
\end{align}
where, for a function $f(x)$, the notation $f(x)=o(x)$ means that $\lim_{x \rw \infty} f(x) = 0$.

Now, starting with the first term on the left-hand side of \eqref{eq:fixpoint}, we can write
\begin{align}
&\mathcal{V}(\rho_\gamma, V_X\, \gamma)  \nonumber\\
&= \mathcal{V}_\text{LB}(\rho_\gamma, V_X\, \gamma) + o(\gamma) \label{eq:Vscaling_a}\\
&= \rho_\gamma \log(\gamma) + \rho_\gamma \log\big(V_X   \big( \tfrac{1}{1-\rho_\gamma} \big)^{1/\rho_\gamma- 1} \tfrac{1}{e}\big) + o(\gamma) \label{eq:Vscaling_b}\\
&= \rho_\gamma \log(\gamma) + \kappa \log\big(V_X   \big( \tfrac{1}{1-\kappa} \big)^{1/\kappa- 1} \tfrac{1}{e}\big) + o(\gamma)\label{eq:Vscaling_c}
\end{align}
where: \eqref{eq:Vscaling_a} follows from Lemma~\ref{lem:VLB_comp} in Appendix~\ref{sec:random_matrix}; \eqref{eq:Vscaling_b} follows from the definition of $\mathcal{V}_\text{LB}(r,\gamma)$ and the fact that $\rho_\gamma$ is eventually less than one; and \eqref{eq:Vscaling_c} follows from \eqref{eq:r_gamma}.

Similarly, starting with the second term on the left-hand side of \eqref{eq:fixpoint}, we can write
\begin{align}
&\kappa \mathcal{V}_\text{LB}(\tfrac{\rho_\gamma}{\kappa}, N_X\, \gamma)  \nonumber\\
&= \kappa \log(\gamma) + \kappa \log\big(N_X  \tfrac{\rho_\gamma}{\kappa} \big( \tfrac{\rho_\gamma}{\rho_\gamma-\kappa} \big)^{\rho_\gamma/\kappa- 1} \tfrac{1}{e}\big) + o(\gamma) \label{eq:VLBscaling_b}\\
&= \kappa \log(\gamma) + \kappa \log\big(N_X  \tfrac{1}{e}\big) + o(\gamma)\label{eq:VLBscaling_c}
\end{align}
where \eqref{eq:VLBscaling_b} follows from the definition of $\mathcal{V}_\text{LB}(r,\gamma)$ and the fact that $\rho_\gamma / \kappa > 1$, and  \eqref{eq:VLBscaling_c} follows from \eqref{eq:r_gamma}.

Plugging \eqref{eq:Vscaling_c} and \eqref{eq:VLBscaling_c} back into \eqref{eq:fixpoint} gives
\begin{align}
&(\rho_\gamma - \kappa) \log(\gamma)   + o(\gamma) \nonumber\\ 
&= 2R(D;\kappa) -\kappa \log\Big(\frac{V_X}{N_X}\Big) - (1-\kappa) \log\Big(\frac{1}{1-\kappa}\Big).
\end{align}
Since $\rho_\gamma$ is a lower bound on the sampling rate-distortion function, the proof is complete.


\section*{Acknowledgment}
We would like to thank Martin Wainwright for helpful discussions and pointers in early versions of this work and the anonymous reviewers for their helpful comments and suggestions. 

\bibliographystyle{ieeetran}



\begin{IEEEbiographynophoto}{Galen Reeves} received the B.S. degree in electrical and computer engineering from Cornell University in 2005 and the M.S. and Ph.D. degrees in electrical engineering and computer sciences from the University of California at Berkeley in 2007 and 2011 respectively. He is currently a postdoctoral scholar at Stanford university. His his research interests include compressed sensing, statistical signal processing, information theory, and machine learning.
\end{IEEEbiographynophoto}


\begin{IEEEbiographynophoto}{Michael Gastpar}
received the Dipl. El.-Ing. degree from ETH Z\"urich, in 1997, the M.S. degree from the University of Illinois at Urbana-Champaign, Urbana, IL, in 1999, and the
Doctorat \`es Science degree from Ecole Polytechnique F\'ed\'erale (EPFL), Lausanne, Switzerland, in 2002, all in electrical engineering. He was also a student in engineering and philosophy at the Universities of Edinburgh and Lausanne.

He is a Professor in the School of Computer and Communication
Sciences, Ecole Polytechnique F\'ed\'erale (EPFL), Lausanne, Switzerland.
He was an Assistant (2003-2008) and tenured Associate Professor (2008-2011)
with the Department of Electrical Engineering and Computer Sciences,
University of California, Berkeley, where he still holds a faculty position.
He also holds a faculty position at Delft University of Technology, The Netherlands,
and was a Researcher with the Mathematics of Communications Department,
Bell Labs, Lucent Technologies, Murray Hill, NJ.
His research interests are
in network information theory and related coding and signal processing techniques,
with applications to sensor networks and neuroscience.

Dr. Gastpar won the 2002 EPFL Best Thesis Award, an NSF CAREER Award
in 2004, an Okawa Foundation Research Grant in 2008, and an ERC Starting Grant in 2010.
He is an Information Theory Society Distinguished Lecturer (2009Ð2011). He was an Associate
Editor for Shannon Theory for the IEEE TRANSACTIONS ON INFORMATION
THEORY (2008--2011), and he has served as Technical Program Committee Co-Chair for the
2010 International Symposium on Information Theory, Austin, TX.
\end{IEEEbiographynophoto} 

\end{document}